\documentclass[a4paper, 11pt]{article}

\usepackage[hang,flushmargin]{footmisc}

\usepackage[dvipsnames]{xcolor}
\usepackage{graphicx}
\usepackage[a4paper,top=28mm,bottom=32mm,left=28mm,right=28mm]{geometry}

\usepackage{amsthm,amsmath,amsfonts,amssymb}
\usepackage{mathtools}

\usepackage{textpos}

\usepackage{caption,subcaption}
\usepackage{enumitem}

\usepackage{tikz, tikz-cd}

\usepackage{float}
\usepackage{braket}

\usepackage[hyperfootnotes=false,colorlinks,
linkcolor={blue!80!black},
citecolor={red!60!black},
urlcolor={blue!80!black}]{hyperref}

\numberwithin{equation}{section}

\newtheorem{theorem}{Theorem}[section]
\newtheorem{lemma}[theorem]{Lemma}
\newtheorem{proposition}[theorem]{Proposition}
\newtheorem{corollary}[theorem]{Corollary}

\theoremstyle{definition}
\newtheorem{definition}[theorem]{Definition}

\theoremstyle{remark}
\newtheorem*{remark}{Remark}

\newcommand{\ip}[2]{\big\langle #1 \,\big|\, #2 \big\rangle}

\begin{document}
	
\begin{center}
	{\boldmath\LARGE\bf Near-Tsirelson Bell--CHSH Violations in Quantum Field Theory via Carleman and Hankel Operators}
	
\vspace{2ex}
	
{\sc David Dudal\footnote{\noindent KU Leuven Kulak, Kortrijk, Belgium.  \href{mailto:david.dudal@kuleuven.be}{\texttt{david.dudal@kuleuven.be}.}} and Ken Vandermeersch\footnote{\noindent KU Leuven Kulak, Kortrijk, Belgium.  \href{mailto:ken.vandermeersch@kuleuven.be}{\texttt{ken.vandermeersch@kuleuven.be}}. Research supported by Methusalem grant METH/21/03 --- long-term structural funding of the Flemish government.}}
\end{center}

\vspace{-0.075cm}

\begin{abstract}
    \noindent We study Bell--Clauser--Horne--Shimony--Holt (Bell--CHSH) violations in the vacuum state of free spinor fields in $(1+1)$-dimensional Minkowski spacetime. We construct explicit smooth compactly supported test functions with spacelike separated supports whose Bell--CHSH correlators converge to Tsirelson’s bound $2\sqrt2$. In the massless case, after passage to the time-zero slice and a natural symmetry reduction, the problem reduces to the quadratic form of the Carleman operator on $L^2([0,\infty))$. Near-maximal Bell violation is then governed by the spectral edge $\pi$, and explicit near-extremizers are obtained from compactly supported cutoffs of the generalized eigenfunction $x^{-1/2}$. This also explains the appearance of the constant $\pi$ in earlier wavelet-based formulations. In the massive case, the same reduction leads to a Hankel operator with kernel $mK_1(m(x+y))$, where $K_1$ denotes the modified Bessel function of the second kind of order $1$, and exponentially damped variants of the massless test functions again yield Bell--CHSH values converging to $2\sqrt2$. Therefore, we establish a direct link between Bell--CHSH violations for free $(1+1)$-dimensional spinor fields and the spectral theory of Carleman and Hankel operators on the half-line.
\end{abstract}
	
\section{Introduction}
Bell’s inequality and its Clauser--Horne--Shimony--Holt (CHSH) form \cite{Bell1964,Clauser1969} quantify the tension between local realism and quantum correlations. In relativistic quantum field theory (QFT), one may ask whether Bell--CHSH violations can be realized by observables localized in spacelike separated regions. In the algebraic QFT framework, Summers and Werner showed that this is indeed the case for free Bose and Fermi quantum field theories \cite{SummersI1987,SummersII1987,Summers1987}: in the vacuum state, suitable spacelike separated observables yield Bell correlations arbitrarily close to Tsirelson’s bound \(2\sqrt2\) \cite{Cirelson80}. The involved smooth compactly supported test functions are, however, only defined implicitly. For recent discussions on this topic, see \cite{Guimaraes:2024mmp,Caribe:2026mam}.

The aim of this paper is more concrete. We consider free spinor fields in \((1+1)\)-dimensional Minkowski spacetime and give an explicit construction of smooth compactly supported test functions for Alice and Bob, with spacelike separated supports, whose Bell--CHSH correlator approaches \(2\sqrt2\). In contrast to the more general existence results of \cite{SummersI1987,SummersII1987,Summers1987}, the present approach is explicit and reduces the Bell problem in this free-field model to concrete spectral problems on the positive half-line: the Carleman operator in the massless case and a Hankel operator with Bessel kernel in the massive case. This yields a direct operator-theoretic explanation of the near-Tsirelson violations and makes it possible to write down explicit test functions.

\medskip

To formulate the Bell problem in this setting, let
\[
h(t,x)=\bigl(h_1(t,x),h_2(t,x)\bigr)^{\mathsf T}
\]
be a smooth compactly supported two-component spinor test function. Smearing the free spinor field with \(h\) gives a field operator \(\psi(h)\), and from this one forms the self-adjoint observable
\[
\mathcal A_h:=\psi(h)+\psi^\dagger(h).
\]
The vacuum two-point function then induces an inner product on test functions,
\begin{equation}\label{eq:intro-innerprod}
\ip{h}{h'}:=\bra{0}\mathcal A_h\mathcal A_{h'}\ket{ 0}.
\end{equation}
Given Alice's test functions \(f,f'\) and Bob's test functions \(g,g'\), the corresponding Bell--CHSH correlator is 
\begin{align*}
\Braket{\mathcal C}
&:=
 i \, \big( \ip{f}{g} + \ip{f'}{g} + \ip{f}{g'} - \ip{f'}{g'}  \big).
\end{align*}
The Bell problem is to find \(f,f'\) and \(g,g'\), with spacelike separated supports, such that \(\bigl|\Braket{\mathcal C}\bigr|\) approaches \(2\sqrt2\).
\medskip

The first step is to pass from the QFT formulation to a one-variable problem on the half-line. After localizing Alice’s test functions on the negative half-line and Bob’s on the positive half-line, we use temporal mollifiers to pass to the time-zero slice and obtain purely spatial inner-product formulas with kernels on \((-\infty,0]\times[0,\infty)\). This reduction is already present in the earlier physics literature; for completeness and to fix the notation used in the main text, Appendix~\ref{sec:app} gives a fully rigorous derivation from \eqref{eq:intro-innerprod}. A natural symmetry ansatz then reduces the four Bell pairings to a single one, thereby reducing the Bell problem to a one-function variational problem for an integral operator on \(L^2\big([0,\infty)\big)\).

\medskip

In the massless case, the relevant integral operator is the Carleman operator
\[
(\mathbf C\phi)(x)=\int_0^\infty \frac{\phi(y)}{x+y}\,\mathrm dy, \qquad \phi \in L^2\big( [0,\infty) \big).
\]
After the symmetry reduction, near-Tsirelson Bell--CHSH violation is governed by the spectral edge \(\pi\) of \(\mathbf C\). This yields an explicit family of smooth compactly supported, spacelike separated test functions whose Bell--CHSH values converge to \(2\sqrt 2\), obtained from compactly supported cutoffs of the generalized eigenfunction \(x \mapsto x^{-1/2}\).

This operator-theoretic viewpoint also resolves the wavelet-based approach of \cite{Dudal2023,DV26}. The numerical problem studied in \cite{Dudal2023} and the matrices studied in \cite{DV26} are precisely finite-dimensional compressions of the Carleman operator in a specific Haar wavelet basis, so the wavelet formulation is a basis-dependent version of the same spectral problem. In particular, the conjectural asymptotic value $\pi$ in \cite{DV26} is explained by the operator norm \(\big\lVert \mathbf C \big\rVert = \pi\), and our argument provides a direct proof of \cite[Conjectures~A \&~B]{DV26}.
 
\medskip 

This operator-theoretic reduction also extends to the massive case. The passage to the time-zero slice and symmetry reduction now leads to a Hankel operator with kernel \(mK_1\big(m(x+y)\big)\), where \(K_1\) is the modified Bessel function of the second kind of order $1$. The associated variational problem is again governed by the spectral edge~\(\pi\), and this again yields explicit test functions, obtained as exponentially damped variants of the massless family, whose Bell--CHSH values converge to \(2\sqrt2\).

\medskip

The main contribution of the paper is a constructive operator-theoretic reformulation of near-Tsirelson Bell--CHSH violations for free $(1+1)$-dimensional spinor fields, reducing the problem to the spectral theory of the Carleman operator in the massless case and of a Hankel operator with Bessel kernel in the massive case. We expect that this viewpoint will be useful for further extensions, in particular to the bosonic case.

\medskip 

The paper is organized as follows. In Section~\ref{sec:free} we treat the massless case and reduce the Bell problem to the Carleman operator. In Section~\ref{sec:massive} we treat the massive case and obtain the corresponding Hankel-operator formulation with Bessel kernel. We end with a short outlook in Section~\ref{sec:out}. Appendix~\ref{sec:app} contains a rigorous derivation of the purely spatial inner-product formulas used in Sections~\ref{sec:free} and~\ref{sec:massive}, making precise a reduction that was already used in the earlier physics literature.

\section{The massless case} \label{sec:free}
 In the massless case, the relevant integral kernel is \((y-x)^{-1}\), and the Bell problem can be reformulated as an extremal problem for the Carleman operator \(\mathbf C\) on \(L^2\big([0,\infty)\big)\). We first show that, after a symmetry reduction, it suffices to construct a single smooth function, compactly supported inside \((0,\infty)\), with Rayleigh quotient close to the spectral edge \(\pi\). We then give an explicit family of such functions and deduce near-Tsirelson Bell--CHSH violations. 

\subsection{Reduction to the Carleman operator}

We work with real-valued purely spatial test functions
\[
f,f' \in C_c^\infty\big( (-\infty,0); \mathbb R^2 \big), \qquad g,g'\in C_c^\infty \big( (0,\infty); \mathbb R^2 \big)
\] 
for Alice and Bob. By Corollary~\ref{cor:app-main-massless} in Appendix~\ref{sec:app}, the reduced purely spatial inner product and norms for \(a \in \{f,f'\}\) and \(b \in \{g,g'\}\) are given as follows:
\begin{align}\label{eq:massless-spatial-pairing}
\ip{a}{b}
&=
\frac{i}{\pi}
\iint_{(-\infty,0]\times [0,\infty)}
\frac{a_1(x)b_1(y)-a_2(x)b_2(y)}{y-x}\,\mathrm dx\,\mathrm dy, \\
\ip{a}{a}
&=
\int_{-\infty}^0 \bigl(a_1(x)^2+a_2(x)^2\bigr)\,\mathrm dx,
\qquad
\ip{b}{b}
=
\int_0^\infty \bigl(b_1(y)^2+b_2(y)^2\bigr)\,\mathrm dy.
\label{eq:massless-spatial-norm}
\end{align}

The problem is to construct normalized test functions \(f,f'\) and \(g,g'\) such that the absolute value of the correlator 
\[
\big\lvert \Braket{ \mathcal C }\big\rvert = \big\lvert \ip fg + \ip {f'}g + \ip f{g'} - \ip {f'}{g'}  \big\rvert
\]
is as close as possible to Tsirelson's bound \(2 \sqrt 2\). 

\medskip 

We impose the following symmetry ansatz: 
\begin{align}\label{eq:massless-ansatz}
    f_2'&=f_1, \qquad f_1'=-f_2, \qquad g_2'=g_1, \qquad g_1'=-g_2, \notag \\
    f_2 &= -cf_1, \qquad g_2=-cg_1, \qquad f_1(x) = -g_1(-x),
\end{align}
where \(c:= \sqrt 2-1\) so that \(c^2 + 2c -1 =0\). This symmetry ansatz is also consistent with the symmetries observed in the earlier numerical constructions \cite{Dudal2023,DV26}. The point of this ansatz is that it reduces the four Bell pairings to a single one. 

\begin{lemma}\label{lemma:massless-symmetry-reduction}
Let \((f,f'),(g,g')\) satisfy \eqref{eq:massless-ansatz}. Then
\begin{equation*}
\ip{f}{g}
=
\ip{f'}g
=
\ip f{g'}
=
-\ip{f'}{g'}, \qquad \ip ff = \ip {f'}{f'} = \ip gg = \ip {g'}{g'}.
\end{equation*}
Consequently,
\[
\big\lvert \Braket{ \mathcal C }\big\rvert
=
4\, \left\lvert \ip fg \right\rvert.
\]
\end{lemma}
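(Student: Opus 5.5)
The plan is a direct computation from the spatial formulas \eqref{eq:massless-spatial-pairing} and \eqref{eq:massless-spatial-norm}. Everything is expressed through the first components $f_1$ and $g_1$. For Alice's functions and Bob's functions \eqref{eq:massless-ansatz} gives
\[
f=(f_1,-cf_1),\qquad f'=(-f_2,f_1)=(cf_1,f_1),\qquad g=(g_1,-cg_1),\qquad g'=(cg_1,g_1).
\]
Set
\[
K:=\frac{i}{\pi}\iint_{(-\infty,0]\times[0,\infty)}\frac{f_1(x)g_1(y)}{y-x}\,\mathrm dx\,\mathrm dy .
\]
Since the pairing \eqref{eq:massless-spatial-pairing} is bilinear in $(a,b)$ and involves only the combination $a_1b_1-a_2b_2$, each of the four pairings is a scalar multiple of $K$. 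The scalar is read off from the corresponding $2\times 2$ component products.

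Carrying this out gives
\[
\ip fg=(1-c^2)K,\qquad \ip{f'}g=\bigl(c-(-c)\bigr)K=2cK,\qquad \ip f{g'}=\bigl(c-(-c)\bigr)K=2cK,\qquad \ip{f'}{g'}=(c^2-1)K.
\]
The one nontrivial point is that $1-c^2=2c$, which is exactly the defining relation $c^2+2c-1=0$ for $c=\sqrt2-1$. With it, the four values become $\ip fg=\ip{f'}g=\ip f{g'}=-\ip{f'}{g'}=2cK$.

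For the norms, \eqref{eq:massless-spatial-norm} gives
\[
\ip ff=\ip{f'}{f'}=(1+c^2)\int_{-\infty}^0 f_1^2\,\mathrm dx,\qquad \ip gg=\ip{g'}{g'}=(1+c^2)\int_0^\infty g_1^2\,\mathrm dy .
\]
The reflection condition $f_1(x)=-g_1(-x)$, after the substitution $y=-x$, equates the two integrals. This is the only place that condition is needed.

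Finally, substitute into the definition of the correlator:
\[
\Braket{\mathcal C}=i\bigl(\ip fg+\ip fg+\ip fg+\ip fg\bigr)=4i\,\ip fg,
\]
so $\lvert\Braket{\mathcal C}\rvert=4\lvert\ip fg\rvert$. I do not expect a genuine obstacle here. The only care needed is to track the signs in $a_1b_1-a_2b_2$ and to confirm that the algebraic identity $1-c^2=2c$ is exactly what makes the $\ip fg$ and $\ip{f'}{g'}$ coefficients match the cross terms.
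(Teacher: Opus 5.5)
Your proposal is correct and is the same argument as the paper's proof, which is just the direct substitution of \eqref{eq:massless-ansatz} into \eqref{eq:massless-spatial-pairing} and \eqref{eq:massless-spatial-norm}. You have written out the component bookkeeping the paper leaves implicit: the coefficients $1-c^2$, $2c$, $2c$, $c^2-1$, the identity $1-c^2=2c$, and the use of $f_1(x)=-g_1(-x)$ only to match Alice's and Bob's norms.
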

\begin{proof}
The identities follow by direct substitution of \eqref{eq:massless-ansatz} into the massless inner product formulas \eqref{eq:massless-spatial-pairing}, \eqref{eq:massless-spatial-norm}.
\end{proof}

Thus it suffices to construct a single function \(g_1 \in C_c^\infty((0,\infty))\) for which \(\left|\ip{f}{g}\right|\) is as close as possible to \(\sqrt2/2\), subject to the normalization \(\ip{g}{g}=1\).

\medskip 

Consider the Hilbert space \(H:= L^2\big( [0,\infty) \big)\) with its standard inner product
\[
\Braket{\phi, \psi } := \int_0^\infty \overline{\phi(x)}\,\psi(x)\,\mathrm dx.
\]

\begin{definition}
The \emph{Carleman operator} \(\mathbf C \in \mathcal B(H)\) is defined by
\[
(\mathbf C\phi)(x)
:=
\int_0^\infty \frac{\phi(y)}{x+y}\, \mathrm dy,
\qquad
\phi \in H.
\]
\end{definition}

We recall the basic spectral facts that will be used below.

\begin{proposition}\label{prop:carleman-spectrum}
The Carleman operator \(\mathbf C\) is bounded, self-adjoint, and nonnegative on \(H\). Moreover, its spectrum \(\sigma(\mathbf C) = [0,\pi]\) is purely absolutely continuous with multiplicity \(2\). In particular,
\[
\big\lVert\mathbf C\big\rVert = \pi.
\]
\end{proposition}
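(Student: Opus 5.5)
The plan is to diagonalize $\mathbf C$ explicitly with the Mellin transform, which turns it into a multiplication operator. The structural reason is that the kernel $(x+y)^{-1}$ is homogeneous of degree $-1$, so $\mathbf C$ commutes with the unitary dilations $(U_t\phi)(x)=e^{t/2}\phi(e^t x)$ on $H$.

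First, I introduce the Mellin transform
\[
(\mathcal M\phi)(s):=\frac{1}{\sqrt{2\pi}}\int_0^\infty x^{-1/2-is}\phi(x)\,\mathrm dx ,
\]
which is unitary from $H$ onto $L^2(\mathbb R)$. Equivalently, I can substitute $x=e^u$, $\phi(x)=e^{-u/2}\Phi(u)$, which is unitary from $H$ to $L^2(\mathbb R)$, and then apply the Fourier transform. Under this substitution $\mathbf C$ becomes convolution on $\mathbb R$:
\[
(\widetilde{\mathbf C}\Phi)(u)=\int_{\mathbb R}k(u-v)\Phi(v)\,\mathrm dv,
\qquad
k(w)=\frac{e^{w/2}}{e^{w}+1}=\frac{1}{2\cosh(w/2)} .
\]
Since $k\in L^1(\mathbb R)$ and $k>0$, Young's inequality gives boundedness with $\lVert\mathbf C\rVert\le\lVert k\rVert_{L^1}=\pi$. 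Self-adjointness follows from the symmetry of the real kernel.

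Second, I compute the Fourier transform of $k$:
\[
\hat k(s)=\int_{\mathbb R}\frac{e^{-isw}}{2\cosh(w/2)}\,\mathrm dw=\frac{\pi}{\cosh(\pi s)} .
\]
This is the standard integral $\int_{\mathbb R}\frac{e^{-isw}}{\cosh w}\,\mathrm dw=\frac{\pi}{\cosh(\pi s/2)}$, obtainable by residues on a rectangular contour, or from the beta-function identity $\int_0^\infty \frac{y^{-1/2+is}}{1+y}\,\mathrm dy=\frac{\pi}{\sin(\pi(1/2+is))}$. Hence $\mathbf C$ is unitarily equivalent to multiplication by $a(s)=\pi/\cosh(\pi s)$ on $L^2(\mathbb R)$.

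The spectral claims then follow from standard facts about multiplication operators.
\begin{itemize}
\item Since $0<a\le\pi$, the operator is nonnegative with norm $\operatorname{ess\,sup}a=\pi$, and the spectrum is the essential range $\overline{a(\mathbb R)}=[0,\pi]$.
\item $a$ is even, smooth, and strictly monotone on $(0,\infty)$ with $a'(s)\neq0$ for $s\neq0$. Hence the pushforward of Lebesgue measure under $a$ is absolutely continuous, and there are no eigenvalues because every level set $\{a=\lambda\}$ is null.
\item Each $\lambda\in(0,\pi)$ has exactly two preimages $\pm s$. Splitting $L^2(\mathbb R)=L^2((-\infty,0))\oplus L^2((0,\infty))$ writes the operator as a direct sum of two copies of multiplication by a strictly monotone function, each unitarily equivalent to multiplication by $\lambda$ on $L^2((0,\pi),\mathrm d\lambda)$. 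This gives multiplicity $2$.
\end{itemize}

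The main obstacle is rigor in the first step. I need the Mellin/Fourier conjugation to be unitary, and the formal kernel identity must hold on a dense set. I would verify it for $\phi\in C_c^\infty((0,\infty))$ by Fubini and then extend by boundedness. The explicit Fourier integral is routine by residues. This also shows the approximate-eigenvector picture: $x^{-1/2}$ corresponds to $s=0$, where $a$ attains $\pi$.
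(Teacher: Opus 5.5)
Your proof is correct, but it takes a different route from the paper. The paper gives no argument of its own for this proposition; it simply cites Peller's monograph and Yafaev for all the spectral facts.

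\textbf{Your route.} The substitution $x=e^u$, $\phi(x)=e^{-u/2}\Phi(u)$ turns $\mathbf C$ into convolution with $k(w)=1/(2\cosh(w/2))$. Conjugating by the Fourier transform then gives the explicit diagonal form $\mathbf C\cong M_a$ with $a(s)=\pi/\cosh(\pi s)$. From this, everything in the statement follows as you indicate: boundedness, self-adjointness, nonnegativity, $\sigma(\mathbf C)=[0,\pi]$, $\lVert\mathbf C\rVert=\pi$, pure absolute continuity, and multiplicity $2$ (one copy of multiplication by $\lambda$ on $L^2((0,\pi))$ for each sign of $s$). Your computations of $k$, of $\lVert k\rVert_{L^1}=\pi$, and of $\hat k$ (via the beta-function identity with $\sin(\pi(1/2+is))=\cosh(\pi s)$) all check out.

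\textbf{What each route buys.} Your diagonalization is the standard one underlying the cited references. Its advantage is transparency and consistency with the rest of the paper. The same logarithmic substitution and the same kernel $h(r)=1/(2\cosh(r/2))$ reappear in the proof of Lemma~\ref{lemma:massless-approx-eigenfunctions}. In your picture, that Rayleigh quotient equals $\frac{1}{2\pi}\int a(s)\lvert\widehat{\psi^{(\varepsilon)}}(s)\rvert^2\,\mathrm ds$. The wide plateaus of $\widetilde\psi^{(\varepsilon)}$ concentrate $\widehat{\psi^{(\varepsilon)}}$ near $s=0$, where $a$ attains $\pi$. It also explains the footnote: $x^{-1/2\pm is}$ are the generalized eigenfunctions on the two branches, and since $a'(0)=0$ they degenerate at the edge into $x^{-1/2}$ and $x^{-1/2}\log x$. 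The citation buys brevity and places the result in the general Hankel-operator framework (Howland) that the paper needs anyway for the massive case. There the kernel $K_1(x+y)$ is no longer homogeneous, so the Mellin transform does not diagonalize it.

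\textbf{A small simplification.} You do not need a density argument for the conjugation. For every $\phi\in H$ and $x>0$, Cauchy--Schwarz together with $\int_0^\infty (x+y)^{-2}\,\mathrm dy = 1/x$ shows that $\int_0^\infty \lvert\phi(y)\rvert (x+y)^{-1}\,\mathrm dy$ converges absolutely. So the change of variables already holds pointwise on all of $H$.
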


\begin{proof}
See, for example, \cite[Chs.~1 and 10]{Pel03} and \cite{yafaev2014spectral}.
\end{proof}

The Bell pairing is now expressed directly in terms of the quadratic form \(\Braket{ g_1,\mathbf C g_1}\).

\begin{lemma}\label{lemma:massless-carleman-reduction}
Let \((f,f'),(g,g')\) satisfy \eqref{eq:massless-ansatz}. Then
\begin{equation*}
\ip{f}{g}
=
\frac{1-c^2}{i\pi}\,\Braket{ g_1,\mathbf C g_1}, \qquad \ip{g}{g}
=
(1+c^2)\,\big\lVert g_1\big\rVert^2.
\end{equation*}
\end{lemma}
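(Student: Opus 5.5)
Both identities come from substituting the ansatz \eqref{eq:massless-ansatz} into \eqref{eq:massless-spatial-pairing} and \eqref{eq:massless-spatial-norm}, followed by the reflection $x\mapsto -x$ that carries Alice's half-line onto Bob's.

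\textbf{The norm.} From $g_2=-cg_1$ and \eqref{eq:massless-spatial-norm} we get
\[
\ip gg=\int_0^\infty\bigl(g_1^2+c^2g_1^2\bigr)\,\mathrm dy=(1+c^2)\lVert g_1\rVert^2 .
\]
Since $g_1$ is real, this is the claimed formula with the $L^2(H)$ norm.

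\textbf{The pairing.} Using $f_2=-cf_1$ and $g_2=-cg_1$, the numerator in \eqref{eq:massless-spatial-pairing} becomes
\[
f_1(x)g_1(y)-c^2f_1(x)g_1(y)=(1-c^2)f_1(x)g_1(y).
\]
Hence
\[
\ip fg=\frac{i(1-c^2)}{\pi}\iint_{(-\infty,0]\times[0,\infty)}\frac{f_1(x)g_1(y)}{y-x}\,\mathrm dx\,\mathrm dy .
\]
Next substitute $x=-u$ with $u\in[0,\infty)$. Then $\mathrm dx$ becomes $\mathrm du$ once the orientation of the limits is fixed, $y-x=y+u$, and $f_1(-u)=-g_1(u)$. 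The double integral therefore equals
\[
-\int_0^\infty\!\!\int_0^\infty\frac{g_1(u)g_1(y)}{u+y}\,\mathrm du\,\mathrm dy=-\Braket{g_1,\mathbf Cg_1},
\]
where the last step uses that $g_1$ is real.

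\textbf{Combining.} This gives $\ip fg=-\frac{i(1-c^2)}{\pi}\Braket{g_1,\mathbf C g_1}$. Since $-i=1/i$, this is exactly $\frac{1-c^2}{i\pi}\Braket{g_1,\mathbf Cg_1}$, as claimed.

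\textbf{Technical points.} The only issues to watch are the convergence and legitimacy of the change of variables, and sign bookkeeping. The supports of $f_1$ and $g_1$ are compact subsets of $(-\infty,0)$ and $(0,\infty)$ respectively, so on the support $y-x$ is bounded below by a positive constant. The integrand is therefore bounded and compactly supported, Fubini applies, and the substitution is routine. The main place an error could slip in is the sign chain: the minus sign in $f_1(x)=-g_1(-x)$, the orientation of $\mathrm dx$, and $i$ versus $1/i$. I would check these carefully; they combine to give the stated prefactor.
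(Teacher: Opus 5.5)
Your proof is correct and follows the paper's argument: you substitute the ansatz into the spatial pairing and norm formulas, factor out $1-c^2$ (resp.\ $1+c^2$), and reflect $x\mapsto -x$, so that the minus sign from $f_1(x)=-g_1(-x)$ combines with $i$ to give $1/i$. Your sign bookkeeping is right, and your Fubini justification via the compact supports inside the open half-lines is sound.
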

\begin{proof} 
Substituting \(f_2=-cf_1\), \(g_2=-cg_1\), and \(f_1(x)=-g_1(-x)\) into \eqref{eq:massless-spatial-pairing}, and then changing variables \(x\mapsto -x\), we obtain
\[
\ip{f}{g}
=
\frac{1-c^2}{i\pi}
\int_0^\infty\int_0^\infty
\frac{g_1(x)g_1(y)}{x+y}\,\mathrm dx \mathrm dy
=
\frac{1-c^2}{i\pi}\,\Braket{ g_1,\mathbf C g_1}.
\]
Likewise,
\[
\ip{g}{g}
=
\int_0^\infty \bigl(g_1(x)^2+g_2(x)^2\bigr)\, \mathrm dx
=
(1+c^2) \, \big\lVert g_1\big\rVert^2.
\]
\end{proof}

The reduction is now complete: near-maximal Bell violation is governed by the spectral edge of the Carleman operator.

\begin{proposition}\label{prop:massless-reduction}
Suppose that \(\phi^{(\varepsilon)} \in C_c^\infty((0,\infty))\) is \(L^2\)-normalized and satisfies
\[
\Braket{ \phi^{(\varepsilon)},\mathbf C\phi^{(\varepsilon)}}
\longrightarrow \pi
\qquad
\text{as }\varepsilon\to0^+.
\]
Define
\[
g_1^{(\varepsilon)}
:=
\frac{1}{\sqrt{1+c^2}}\,\phi^{(\varepsilon)},
\qquad
g_2^{(\varepsilon)}
:=
-\frac{c}{\sqrt{1+c^2}}\,\phi^{(\varepsilon)},
\]
and define \(g'{}^{(\varepsilon)}, f^{(\varepsilon)}, f'{}^{(\varepsilon)}\) by \eqref{eq:massless-ansatz}. Then
\(
\ip{g^{(\varepsilon)}}{g^{(\varepsilon)}} = 1
\) 
and
\[
\bigl\lvert \ip{f^{(\varepsilon)}}{g^{(\varepsilon)}} \bigr\rvert
=
\frac{1-c^2}{\pi(1+c^2)}
\,
\Braket{ \phi^{(\varepsilon)},\mathbf C\phi^{(\varepsilon)}}
\longrightarrow
\frac{1-c^2}{1+c^2}
=
\frac{\sqrt2}{2}.
\]
In particular,
\[
\left| \Braket{ \mathcal C^{(\varepsilon)} } \right|
\longrightarrow
2\sqrt2.
\]
\end{proposition}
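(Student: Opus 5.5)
The argument is a direct assembly of the two preceding lemmas, so I would proceed in three short steps.

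First, I check that the defined functions fit Lemma~\ref{lemma:massless-carleman-reduction}. By construction, $g_2^{(\varepsilon)}=-c\,g_1^{(\varepsilon)}$. Defining $f_1^{(\varepsilon)}(x):=-g_1^{(\varepsilon)}(-x)$, $f_2^{(\varepsilon)}:=-c f_1^{(\varepsilon)}$, and the primed functions through \eqref{eq:massless-ansatz} gives a family satisfying the whole ansatz. Since $\phi^{(\varepsilon)}\in C_c^\infty((0,\infty))$, Bob's functions lie in $C_c^\infty((0,\infty);\mathbb R^2)$ and Alice's in $C_c^\infty((-\infty,0);\mathbb R^2)$. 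Their supports are therefore spacelike separated, and the formulas \eqref{eq:massless-spatial-pairing}--\eqref{eq:massless-spatial-norm} apply. I would also note that $\phi^{(\varepsilon)}$ must be real-valued for the test functions to be real; I assume this, as the intended construction clearly does.

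Second, I apply Lemma~\ref{lemma:massless-carleman-reduction}. It gives
\[
\ip{g^{(\varepsilon)}}{g^{(\varepsilon)}}=(1+c^2)\,\big\lVert g_1^{(\varepsilon)}\big\rVert^2=\big\lVert\phi^{(\varepsilon)}\big\rVert^2=1 .
\]
It also gives $\ip{f^{(\varepsilon)}}{g^{(\varepsilon)}}=\frac{1-c^2}{i\pi}\cdot\frac{1}{1+c^2}\Braket{\phi^{(\varepsilon)},\mathbf C\phi^{(\varepsilon)}}$. By Proposition~\ref{prop:carleman-spectrum}, $\mathbf C$ is nonnegative, so the quadratic form is real and nonnegative. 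Taking absolute values then yields the displayed identity, and the hypothesis $\Braket{\phi^{(\varepsilon)},\mathbf C\phi^{(\varepsilon)}}\to\pi$ gives the limit $(1-c^2)/(1+c^2)$.

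Third, I evaluate this constant using $c^2=1-2c$, which follows from $c^2+2c-1=0$. This gives
\[
\frac{1-c^2}{1+c^2}=\frac{2c}{2-2c}=\frac{c}{1-c}.
\]
With $c=\sqrt2-1$ we get $\frac{\sqrt2-1}{2-\sqrt2}=\frac{1}{\sqrt2}=\frac{\sqrt2}{2}$. Finally, Lemma~\ref{lemma:massless-symmetry-reduction} gives $|\Braket{\mathcal C^{(\varepsilon)}}|=4|\ip{f^{(\varepsilon)}}{g^{(\varepsilon)}}|\to 4\cdot\frac{\sqrt2}{2}=2\sqrt2$. The same lemma also shows that all four test functions are normalized, so the CHSH setting is legitimate.

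None of these steps is genuinely difficult. The only point needing care is the bookkeeping in the first step: confirming that the ansatz-defined primed functions and Alice's reflected functions have the stated supports and regularity, so that both lemmas apply with all normalizations consistent. The real difficulty, exhibiting an explicit $\phi^{(\varepsilon)}$ whose Rayleigh quotient tends to $\pi$, lies outside this proposition and is deferred to the subsequent construction based on cutoffs of $x^{-1/2}$.
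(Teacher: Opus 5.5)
Your proposal is correct and follows the paper's own route: the paper's proof is just the remark that the statement is immediate from Lemmas~\ref{lemma:massless-symmetry-reduction} and~\ref{lemma:massless-carleman-reduction}. You carry out the same assembly and add the routine checks the paper leaves implicit, namely that $\phi^{(\varepsilon)}$ is real-valued, that $\mathbf C\ge 0$ lets you drop the absolute value, and that $c^2=1-2c$ evaluates the constant to $\sqrt2/2$.
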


\begin{proof}
This is immediate from Lemmas~\ref{lemma:massless-symmetry-reduction} and \ref{lemma:massless-carleman-reduction}.
\end{proof}

\subsection{Explicit approximate eigenfunctions}

In view of Proposition~\ref{prop:massless-reduction}, it remains to construct a family of \(L^2\)-normalized functions in \(C_c^\infty((0,\infty))\) whose Carleman Rayleigh quotient \(\Braket{ \phi^{(\varepsilon)}, \mathbf{C}\phi^{(\varepsilon)} }\) converges to the spectral edge \(\pi\). An explicit choice that works is the following.

For each \(0<\varepsilon<1\), let \(\widetilde\phi^{(\varepsilon)} \in C_c^\infty((0,\infty))\) be given by
\begin{center}
\begin{minipage}[t]{0.6\textwidth}
\[
\widetilde \phi^{(\varepsilon)}(x) = \begin{cases}
    0 &  x\in \big[0,\frac \varepsilon 2\big]\\[0.3em]
    \tau^{(\varepsilon)}(2x-\varepsilon)\, x^{-1/2}
      & x\in \big( \frac \varepsilon 2,\varepsilon\big)\\[0.3em]
    x^{-1/2}
      & x\in \big[\varepsilon, \frac 1 \varepsilon \big]\\[0.3em]
    \tau^{(\varepsilon)}(2\varepsilon-\varepsilon^2 x)\, x^{-1/2}
      & x\in\big(\frac 1 \varepsilon,\frac 2 \varepsilon\big)\\[0.3em]
    0 & x \in \big[\frac2 \varepsilon, \infty\big)
\end{cases}\] 
\end{minipage} \begin{minipage}[t]{0.38\textwidth}
\vspace{30pt}
\centering
\begin{tikzpicture}[scale=0.7]

  \draw[red, very thick] (0.8,0) -- (1.5,0);
  \draw[red, very thick] (6.4,0) -- (6.8,0);

  \draw[red, very thick]
    plot[smooth] coordinates {
    (1.4,0) 
      (1.5,0.01)
      (1.52,0.020)
      (1.66,0.120)
      (1.80,0.320)
      (1.92,0.650)
      (2.02,1.020)
      (2.10,1.300)
      (2.16,1.420)
      (2.2,1.453)
      (2.24,1.45)
    };

  \draw[red, thin, dashed]
    plot[smooth] coordinates {
      (1.60,1.76)
      (1.72,1.69)
      (1.85,1.62)
      (2.00,1.54)
      (2.10,1.49)
      (2.20,1.46)
    };

  \draw[red, very thick]
    plot[smooth] coordinates {
      (2.2,1.46)
      (2.45,1.37)
      (2.75,1.28)
      (3.10,1.19)
      (3.50,1.11)
      (3.95,1.05)
      (4.45,1.00)
      (4.85,0.98)
      (5.2,0.96)
    };

  \draw[red, thin, dashed]
    plot[smooth] coordinates {
      (5.20,0.96)
      (5.45,0.94)
      (5.70,0.91)
      (5.95,0.88)
      (6.20,0.87)
      (6.45,0.85)
    };

  \draw[red, very thick]
    plot[smooth] coordinates {
      (5.2,0.96)
      (5.34,0.95)
      (5.50,0.93)
      (5.68,0.90)
      (5.84,0.84)
      (6.00,0.73)
      (6.14,0.56)
      (6.26,0.32)
      (6.34,0.12)
      (6.4,0)
    };

  \draw[dashed] (1.5,0) -- (1.5,0.50);
  \draw[dashed] (2.2,0) -- (2.2,1.46);
  \draw[dashed] (5.2,0) -- (5.2,0.96);
  \draw[dashed] (6.4,0) -- (6.4,0.40);

  \draw[->] (0.4,0) -- (7.0,0);
  \draw[->] (0.8,-0.2) -- (0.75,2.3);

  \node[red] at (4.15,1.5) {$\widetilde\phi^{(\varepsilon)}$};

  \draw (1.5,-0.05) node[below] {$\frac{\varepsilon}{2}$};
  \draw (2.2,-0.05) node[below] {$\varepsilon$};
  \draw (5.2,-0.05) node[below] {$\frac{1}{\varepsilon}$};
  \draw (6.4,-0.05) node[below] {$\frac{2}{\varepsilon}$};

\end{tikzpicture}
\end{minipage}
\end{center}
where 
\[
\tau^{(\varepsilon)}(x)
=
\left[
1+\exp\left(\frac{\varepsilon(2x-\varepsilon)}{x(x-\varepsilon)}\right)
\right]^{-1},
\qquad 0<x<\varepsilon.
\]
is a smooth version of the standard step function.  We then define \(\phi^{(\varepsilon)}\) to be the \(L^2\)-normalization
\[
\phi^{(\varepsilon)}
:=
\frac{\widetilde\phi^{(\varepsilon)}}{\big\lVert\widetilde\phi^{(\varepsilon)}\big\rVert}.
\]
By construction, \(\phi^{(\varepsilon)}\) is smooth and compactly supported inside \((0,\infty)\), vanishing identically on \([0,\varepsilon/2]\). In particular, Alice's and Bob's test functions remain supported on strictly separated half-lines, thereby preserving spacelike separation. 

\begin{remark}  The above choice is motivated by the fact that \(x\mapsto x^{-1/2}\) is a ``generalized eigenfunction'' at the spectral edge \(\pi\) of the Carleman operator; see, for example, \cite[\S 1.2]{yafaev2014spectral}, \cite{Yafaev2010commutator,petrov2019exact}\footnote{For completeness, since the spectrum of the Carleman operator \(\mathbf C\) has multiplicity \(2\), there is another generalized eigenfunction at the spectral edge \(\pi\), namely \(x\mapsto x^{-1/2}\log x\); see \cite{petrov2019exact}. We do not use it here.}. Thus suitable compactly supported cutoffs of \(x^{-1/2}\) should have Rayleigh quotient close to \(\pi\). The next lemma makes this precise.    
\end{remark}
\begin{lemma}\label{lemma:massless-approx-eigenfunctions}
The family \(\phi^{(\varepsilon)}\) satisfies
\[
\Braket{ \phi^{(\varepsilon)}, \mathbf C \phi^{(\varepsilon)}}
\longrightarrow \pi
\qquad
\text{as } \varepsilon \to 0^+.
\]
\end{lemma}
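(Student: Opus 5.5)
Since \(\lVert \mathbf C\rVert=\pi\) by Proposition~\ref{prop:carleman-spectrum}, we already have \(\Braket{\phi^{(\varepsilon)},\mathbf C\phi^{(\varepsilon)}}\le\pi\). It therefore suffices to prove the lower bound
\[
\liminf_{\varepsilon\to0^+}\frac{\Braket{\widetilde\phi^{(\varepsilon)},\mathbf C\widetilde\phi^{(\varepsilon)}}}{\lVert\widetilde\phi^{(\varepsilon)}\rVert^2}\ \ge\ \pi .
\]

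The natural tool is the logarithmic change of variables \(x=e^{s}\). The map \((U\phi)(s)=e^{s/2}\phi(e^s)\) is unitary from \(L^2([0,\infty))\) onto \(L^2(\mathbb R)\). Under \(U\), the kernel \(\frac{1}{x+y}\) becomes the convolution kernel
\[
k(s-t)=\frac{e^{(s+t)/2}}{e^s+e^t}=\frac{1}{2\cosh\big((s-t)/2\big)},
\]
whose total integral is \(\int_{\mathbb R}\frac{\mathrm du}{2\cosh(u/2)}=\pi\). Moreover, \(u:=U\widetilde\phi^{(\varepsilon)}\) is a plateau function. It equals \(1\) on \([\log\varepsilon,-\log\varepsilon]\), an interval of length \(L=2\log(1/\varepsilon)\). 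On the two transition intervals \((\log(\varepsilon/2),\log\varepsilon)\) and \((\log(1/\varepsilon),\log(2/\varepsilon))\), each of length \(\log 2\), it takes values in \([0,1]\). Everywhere else it vanishes.

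For the norm, \(\lVert\widetilde\phi^{(\varepsilon)}\rVert^2=\lVert u\rVert^2\). This satisfies \(L\le\lVert u\rVert^2\le L+2\log2\), so it equals \(2\log(1/\varepsilon)+O(1)\).

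For the quadratic form, nonnegativity of \(k\) and \(u\) lets us discard every contribution except the plateau:
\[
\Braket{u,k*u}\ \ge\ \int_I\int_I k(s-t)\,\mathrm ds\,\mathrm dt,\qquad I=[\log\varepsilon,-\log\varepsilon].
\]
This equals
\[
\int_{-L}^{L}(L-|v|)\,k(v)\,\mathrm dv=\pi L-\int_{\mathbb R}\min(|v|,L)\,k(v)\,\mathrm dv\ \ge\ \pi L-\int_{\mathbb R}|v|\,k(v)\,\mathrm dv .
\]
The last integral is a finite constant, because \(k(v)\) decays like \(e^{-|v|/2}\). Hence \(\Braket{u,k*u}\ge\pi L-O(1)\), and dividing by \(\lVert u\rVert^2=L+O(1)\) gives a Rayleigh quotient of at least \(\pi-O(1/\log(1/\varepsilon))\to\pi\). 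Together with the upper bound, this proves the lemma, with an explicit logarithmic rate.

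The main obstacle is carrying out the change of variables cleanly. One has to check unitarity of \(U\) and the form of the kernel:
\[
\int\!\!\int\frac{\phi(x)\phi(y)}{x+y}\,\mathrm dx\,\mathrm dy=\int\!\!\int u(s)u(t)\,\frac{e^{s}e^{t}e^{-s/2}e^{-t/2}}{e^s+e^t}\,\mathrm ds\,\mathrm dt .
\]
One also has to evaluate \(\int_{\mathbb R}\frac{\mathrm du}{2\cosh(u/2)}=\pi\), which follows from \(\int_{\mathbb R}\operatorname{sech}w\,\mathrm dw=\pi\). After that, the only delicate point is that \(\tau^{(\varepsilon)}\in[0,1]\) on the cutoff regions. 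This holds directly from its logistic form, and it is all that is needed for both the positivity and the norm bounds. The fine shape of the mollifier is irrelevant.
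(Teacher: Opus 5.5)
Your proof is correct and is essentially the paper's argument: it uses the same logarithmic substitution, the same convolution kernel \(1/(2\cosh(r/2))\) with total mass \(\pi\), and the same plateau-overlap lower bound \((2T-|r|)_+/(2T+2\log 2)\), since your \(\int_I\int_I k(s-t)\,\mathrm ds\,\mathrm dt=\int_{\mathbb R}(L-|v|)_+\,k(v)\,\mathrm dv\) is exactly the paper's overlap estimate for \(W^{(\varepsilon)}\) integrated against \(h\). The only differences are minor: the paper gets the upper bound from \(W^{(\varepsilon)}\le 1\) and concludes by dominated convergence, whereas you take the upper bound from \(\lVert\mathbf C\rVert=\pi\) and use the finite first moment of \(h\), which gives the explicit rate \(\pi-O\bigl(1/\log(1/\varepsilon)\bigr)\).
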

\begin{proof}
Define the Carleman quadratic form
\[
Q_{\mathbf C}(\phi)
:=
\Braket{ \phi,\mathbf C\phi}
=
\int_0^\infty \int_0^\infty \frac{\overline{\phi(x)}\phi(y)}{x+y}\,\mathrm dx\,\mathrm dy,
\qquad \phi \in H.
\]
We apply the change of variables \(x = e^s, y = e^t\), and define \(
\widetilde\psi^{(\varepsilon)}(s)
:=
e^{s/2}\widetilde\phi^{(\varepsilon)}\big(e^s\big)
\)  for all \(s \in \mathbb R\). 
Then
\[
\big\lVert\widetilde\phi^{(\varepsilon)}\big\rVert^2
=
\int_0^\infty \bigl|\widetilde\phi^{(\varepsilon)}(x)\bigr|^2\,\mathrm dx
=
\int_{-\infty}^\infty \bigl|\widetilde\psi^{(\varepsilon)}(s)\bigr|^2\,\mathrm ds
=
\big\lVert\widetilde\psi^{(\varepsilon)}\big\rVert^2.
\]
Hence, if we set
\(
\psi^{(\varepsilon)}(s)
:=
{\widetilde\psi^{(\varepsilon)}(s)}/{\big\lVert\widetilde\psi^{(\varepsilon)}\big\rVert}
=
e^{s/2}\phi^{(\varepsilon)}\big(e^s\big),
\)
then \(\big\lVert\psi^{(\varepsilon)}\big\rVert=1\).

\medskip 

A direct computation shows that
\[
Q_{\mathbf C}\big(\phi^{(\varepsilon)}\big)
=
\iint_{\mathbb R^2}
h(s-t)\,\psi^{(\varepsilon)}(s)\psi^{(\varepsilon)}(t)\,\mathrm ds\,\mathrm dt, \qquad 
h(r)
:=
\frac{1}{2\cosh(r/2)}.
\]
After the change of variables \(r=s-t\), this becomes
\[
Q_{\mathbf C}\big(\phi^{(\varepsilon)}\big)
=
\int_{-\infty}^\infty h(r)\,W^{(\varepsilon)}(r)\,\mathrm dr, \qquad 
W^{(\varepsilon)}(r)
:=
\int_{-\infty}^\infty
\psi^{(\varepsilon)}(t+r)\psi^{(\varepsilon)}(t)\,\mathrm dt.
\]
Since
\(
\int_{-\infty}^\infty h(r)\,\mathrm dr = \pi
\)
and, by Cauchy--Schwarz,
\[
0 \leq W^{(\varepsilon)}(r) \leq \big\lVert\psi^{(\varepsilon)}\big\rVert^2 = 1,
\]
it suffices by the dominated convergence theorem to prove that
\[
W^{(\varepsilon)}(r)\longrightarrow 1
\qquad
\text{for every fixed } r\in\mathbb R.
\]

\medskip 

Let
\(
T:=\log(1/\varepsilon).
\) 
Since \(\widetilde\phi^{(\varepsilon)}(x)=x^{-1/2}\) on \([\varepsilon,1/\varepsilon]\), we have
\[
\widetilde\psi^{(\varepsilon)}(s)=1
\qquad
\text{for } s\in[-T,T].
\]
Moreover, \(\widetilde\psi^{(\varepsilon)}\) is supported on
\[
\big[-(T+\log 2),\,T+\log 2 \big]
\]
and satisfies \(0\leq \widetilde\psi^{(\varepsilon)} \leq 1\). Therefore,
\[
2T \leq \big\lVert\widetilde\psi^{(\varepsilon)}\big\rVert^2 \leq 2T+2\log 2.
\]
Fix \(r\in\mathbb R\). For \(\varepsilon\) sufficiently small, we have \(2T>|r|\). On the overlap
\(
[-T,T]\cap[-T-r,T-r],
\)
one has
\[
\widetilde\psi^{(\varepsilon)}(t+r)\widetilde\psi^{(\varepsilon)}(t)=1.
\]
Hence
\[
W^{(\varepsilon)}(r)
=
\frac{1}{\big\lVert\widetilde\psi^{(\varepsilon)}\big\rVert^2}
\int_{-\infty}^\infty
\widetilde\psi^{(\varepsilon)}(t+r)\widetilde\psi^{(\varepsilon)}(t)\,\mathrm dt
\geq
\frac{2T-|r|}{2T+2\log 2}
\longrightarrow 1 \qquad \text{as }\varepsilon\to0^+.
\]
Since also \(W^{(\varepsilon)}(r)\leq 1\), we conclude that
\[
W^{(\varepsilon)}(r)\longrightarrow 1
\qquad
\text{for every fixed } r\in\mathbb R.
\]

The dominated convergence theorem therefore yields
\[
Q_{\mathbf C}\big(\phi^{(\varepsilon)}\big)\longrightarrow \int_{-\infty}^\infty h(r)\,\mathrm dr = \pi \qquad \text{as \(\varepsilon \to 0^+\)}.
\]
This proves the lemma.
\end{proof}

\subsection{Near-Tsirelson Bell--CHSH violations}

We now combine Proposition~\ref{prop:massless-reduction} and Lemma~\ref{lemma:massless-approx-eigenfunctions}.

For each \(0<\varepsilon<1\), define Bob's test functions \((g,g')\) by
\begin{align*}
g_1^{(\varepsilon)}(x) &:= \frac{1}{\sqrt{1+c^2}}\,\phi^{(\varepsilon)}(x), &
g_2^{(\varepsilon)}(x) &:= -\frac{c}{\sqrt{1+c^2}}\,\phi^{(\varepsilon)}(x), \\
g_1'{}^{(\varepsilon)}(x) &:= \frac{c}{\sqrt{1+c^2}}\,\phi^{(\varepsilon)}(x), &
g_2'{}^{(\varepsilon)}(x) &:= \frac{1}{\sqrt{1+c^2}}\,\phi^{(\varepsilon)}(x),
\end{align*}
for \(x\geq 0\), and Alice's test functions \((f,f')\) by
\begin{align*}
f_1^{(\varepsilon)}(x) &:= -\frac{1}{\sqrt{1+c^2}}\,\phi^{(\varepsilon)}(-x), &
f_2^{(\varepsilon)}(x) &:= \frac{c}{\sqrt{1+c^2}}\,\phi^{(\varepsilon)}(-x), \\
f_1'{}^{(\varepsilon)}(x) &:= -\frac{c}{\sqrt{1+c^2}}\,\phi^{(\varepsilon)}(-x), &
f_2'{}^{(\varepsilon)}(x) &:= -\frac{1}{\sqrt{1+c^2}}\,\phi^{(\varepsilon)}(-x),
\end{align*}
for \(x\leq 0\). By construction, these test functions satisfy the symmetry ansatz \eqref{eq:massless-ansatz}.

\begin{theorem}\label{thm:massless-near-tsirelson}
For every \(0<\varepsilon<1\), the test functions
\(
(f^{(\varepsilon)},f'{}^{(\varepsilon)})\), \((g^{(\varepsilon)},g'{}^{(\varepsilon)})
\) 
are smooth and compactly supported on strictly separated half-lines, satisfy
\[
\ip{f^{(\varepsilon)}}{f^{(\varepsilon)}}
=
\ip{f'{}^{(\varepsilon)}}{f'{}^{(\varepsilon)}}
=
\ip{g^{(\varepsilon)}}{g^{(\varepsilon)}}
=
\ip{g'{}^{(\varepsilon)}}{g'{}^{(\varepsilon)}}
=1,
\]
and yield Bell--CHSH violations
\[
\left| \Braket{ \mathcal C^{(\varepsilon)} } \right|
\longrightarrow 2\sqrt2
\qquad
\text{as }\varepsilon\to0^+.
\]
\end{theorem}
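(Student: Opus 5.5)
The theorem is essentially a bookkeeping combination of Proposition~\ref{prop:massless-reduction} and Lemma~\ref{lemma:massless-approx-eigenfunctions}. I would verify three things in order: regularity and support, normalization, and the limit of the correlator.

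\emph{Regularity and support.} By construction $\widetilde\phi^{(\varepsilon)}$ is smooth on $(0,\infty)$ with support in $[\varepsilon/2,2/\varepsilon]$. Smoothness at the gluing points $\varepsilon/2,\varepsilon,1/\varepsilon,2/\varepsilon$ holds because $\tau^{(\varepsilon)}$ tends to $0$ and $1$ at the endpoints of $(0,\varepsilon)$ with all derivatives flat. This is the only point needing a short check: the exponent $\varepsilon(2x-\varepsilon)/(x(x-\varepsilon))$ tends to $+\infty$ as $x\to0^+$ and to $-\infty$ as $x\to\varepsilon^-$, so $\tau^{(\varepsilon)}$ is a standard flat transition. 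Hence $g,g'$ are supported in $[\varepsilon/2,2/\varepsilon]$ and $f,f'$ in $[-2/\varepsilon,-\varepsilon/2]$, which are strictly separated half-lines, and all components are real constant multiples of $\phi^{(\varepsilon)}(\pm x)$.

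\emph{Normalization.} The displayed formulas satisfy the ansatz \eqref{eq:massless-ansatz}. One checks directly that $g_2=-cg_1$, $g_1'=-g_2$, $g_2'=g_1$, that $f_1(x)=-g_1(-x)$, and that $f',f_2$ follow similarly. By Lemma~\ref{lemma:massless-carleman-reduction}, $\ip{g}{g}=(1+c^2)\lVert g_1\rVert^2=\lVert\phi^{(\varepsilon)}\rVert^2=1$. Lemma~\ref{lemma:massless-symmetry-reduction} then gives equality of all four norms. Alternatively, one computes each norm directly from \eqref{eq:massless-spatial-norm} as $(1+c^2)/(1+c^2)=1$.

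\emph{Limit.} Proposition~\ref{prop:massless-reduction} applies with this $\phi^{(\varepsilon)}$, which is $L^2$-normalized and lies in $C_c^\infty((0,\infty))$. Its hypothesis $\Braket{\phi^{(\varepsilon)},\mathbf C\phi^{(\varepsilon)}}\to\pi$ is exactly Lemma~\ref{lemma:massless-approx-eigenfunctions}. So $|\ip{f}{g}|\to(1-c^2)/(1+c^2)$. With $c=\sqrt2-1$, $c^2=3-2\sqrt2$, this equals $(2\sqrt2-2)/(4-2\sqrt2)=\sqrt2/2$. Lemma~\ref{lemma:massless-symmetry-reduction} gives $|\Braket{\mathcal C^{(\varepsilon)}}|=4|\ip{f}{g}|\to2\sqrt2$.

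No real obstacle remains, since the analytic content is in Lemma~\ref{lemma:massless-approx-eigenfunctions}. The only mildly delicate step is the smoothness of the cutoff at the gluing points.
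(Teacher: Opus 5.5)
Your proposal is correct and follows the paper's proof. Lemma~\ref{lemma:massless-approx-eigenfunctions} supplies the hypothesis of Proposition~\ref{prop:massless-reduction}, which gives $\ip{g^{(\varepsilon)}}{g^{(\varepsilon)}}=1$ and $\bigl|\Braket{\mathcal C^{(\varepsilon)}}\bigr|\to 2\sqrt2$, and the remaining norm identities come from Lemma~\ref{lemma:massless-symmetry-reduction}. Your extra checks (flatness of $\tau^{(\varepsilon)}$ at the gluing points, verification of the ansatz, and the arithmetic $(1-c^2)/(1+c^2)=\sqrt2/2$) are accurate details that the paper leaves implicit.
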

\begin{proof}
By Lemma~\ref{lemma:massless-approx-eigenfunctions}, the family \(\phi^{(\varepsilon)}\) satisfies the hypothesis of Proposition~\ref{prop:massless-reduction}. Hence
\[
\ip{g^{(\varepsilon)}}{g^{(\varepsilon)}}=1,
\qquad
\bigl\lvert \ip{f^{(\varepsilon)}}{g^{(\varepsilon)}} \bigr\rvert \longrightarrow \frac{\sqrt2}{2},
\qquad
\left| \Braket{ \mathcal C^{(\varepsilon)} } \right|  \longrightarrow 2\sqrt2.
\]
The remaining norm identities follow from Lemma~\ref{lemma:massless-symmetry-reduction}.
\end{proof}

\begin{remark} 
The choice \(c=\sqrt2-1\) is singled out by the identity \(1-c^2=2c\), which makes the four Bell pairings collapse as in Lemma~\ref{lemma:massless-symmetry-reduction} and yields the near-maximal value \(2\sqrt2\). For general \(c\ge 0\), the same construction gives
\[
\ip{f}{g}
= - \ip{f'}{g'} =
\frac{1-c^2}{i\pi}\,\Braket{g_1,\mathbf C g_1},
\qquad
\ip{f'}{g}
=
\ip{f}{g'}
=
\frac{2c}{i\pi}\,\Braket{g_1,\mathbf C g_1}.
\]
After normalization, this yields
\[
\left|\Braket{\mathcal C^{(\varepsilon)}}\right|
\longrightarrow
\frac{2(1+2c-c^2)}{1+c^2}
\]
whenever \(\Braket{\phi^{(\varepsilon)},\mathbf C\phi^{(\varepsilon)}}\to\pi\). In this way one obtains all limiting Bell--CHSH values in \([2,2\sqrt2]\), and in particular all violations in \((2,2\sqrt2)\), by varying \(c\in[0,\sqrt2-1]\).
\end{remark}

\subsection{Relation with the Haar wavelet basis approach}

The operator-theoretic viewpoint developed above also settles the Haar wavelet-based formulation of \cite{Dudal2023,DV26}. Indeed, the matrices considered in \cite{DV26} arise as finite-dimensional compressions of the Carleman operator \(\mathbf C\) in a specific Haar wavelet basis. Thus \cite[Conjectures~A and B]{DV26} become basis-dependent special cases of the present operator-theoretic approach: the conjectural asymptotic value \(\pi\) is explained by the fact that \(\pi=\lVert\mathbf C\rVert\). In particular, if \(A(N,K)\) denotes the matrix introduced in \cite{DV26}, then
\[
\big\lVert A(N,K)\big\rVert_2=\lambda_{\max}\bigl(A(N,K)\bigr)\longrightarrow \pi
\qquad
\text{as } N,K\to\infty.
\]

This is an instance of the following standard fact from functional analysis.

\begin{lemma}
Let \(H\) be a Hilbert space, let \(\mathbf C\in \mathcal B(H)\) be self-adjoint and nonnegative, and let \((e_n)_{n\geq 1}\) be an orthonormal basis of \(H\). If \(A_N=\big[\Braket{ e_i,\mathbf C e_j}\big]_{1\leq i,j\leq N}\), then
\[
\lVert A_N\rVert_2=\lambda_{\max}(A_N)\longrightarrow \lVert \mathbf C\rVert
\qquad
\text{as } N\to\infty.
\]
\end{lemma}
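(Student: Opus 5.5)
The plan is to write \(P_N\) for the orthogonal projection onto \(\mathrm{span}\{e_1,\dots,e_N\}\) and to identify \(A_N\) with the compression \(P_N\mathbf C P_N\) restricted to \(\operatorname{ran} P_N\). Since \(\mathbf C\) is self-adjoint and nonnegative, \(A_N\) is a Hermitian positive semidefinite matrix: \(\overline{\langle e_j,\mathbf C e_i\rangle}=\langle e_i,\mathbf C e_j\rangle\), and \(v^*A_Nv=\langle u,\mathbf C u\rangle\ge 0\) for \(u=\sum_j v_je_j\). For such a matrix the spectral norm equals the largest eigenvalue, which is the first equality. The variational characterization then gives
\[
\lambda_{\max}(A_N)=\sup\{\langle u,\mathbf C u\rangle : u\in \operatorname{ran}P_N,\ \|u\|=1\},
\]
using that \(u\mapsto (\langle e_j,u\rangle)_j\) is an isometry from \(\operatorname{ran}P_N\) onto \(\mathbb C^N\).

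From this formula, \(\lambda_{\max}(A_N)\) is nondecreasing in \(N\), because the subspaces \(\operatorname{ran}P_N\) are nested. It is also bounded above by \(\sup_{\|u\|=1}\langle u,\mathbf C u\rangle\). For a nonnegative self-adjoint operator this supremum equals \(\|\mathbf C\|\); this is the standard numerical-range identity for self-adjoint operators, or follows from \(\|\mathbf C\|=\|\mathbf C^{1/2}\|^2\). So the sequence converges to some limit \(L\le\|\mathbf C\|\).

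For the reverse inequality, fix \(\delta>0\) and pick a unit vector \(u\) with \(\langle u,\mathbf C u\rangle>\|\mathbf C\|-\delta\). Because \((e_n)\) is an orthonormal basis, \(P_Nu\to u\) in norm. Boundedness of \(\mathbf C\) gives continuity of the quadratic form, so
\[
\frac{\langle P_Nu,\mathbf C P_Nu\rangle}{\|P_Nu\|^2}\longrightarrow \langle u,\mathbf C u\rangle .
\]
Here \(\|P_Nu\|\neq0\) for \(N\) large. Hence \(\lambda_{\max}(A_N)>\|\mathbf C\|-2\delta\) eventually, and therefore \(L=\|\mathbf C\|\).

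The only step needing care is the identity \(\sup_{\|u\|=1}\langle u,\mathbf C u\rangle=\|\mathbf C\|\), and this is where nonnegativity is used. Without it the limit would be the top of the spectrum rather than the norm. I would justify the identity by Cauchy--Schwarz for the semi-inner product \((u,v)\mapsto\langle u,\mathbf C v\rangle\): it gives \(|\langle v,\mathbf C u\rangle|^2\le\langle u,\mathbf C u\rangle\langle v,\mathbf C v\rangle\), and taking the supremum over unit vectors \(u,v\) yields the claim.

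Applied with \(\|\mathbf C\|=\pi\) from Proposition~\ref{prop:carleman-spectrum}, this gives the stated convergence to \(\pi\). The double limit \(N,K\to\infty\) for the Haar matrices is handled the same way, provided the corresponding subspaces increase to a dense subspace.
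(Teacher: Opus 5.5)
Your proposal is correct and takes essentially the same route as the paper. Both arguments identify $A_N$ with the compression $\mathbf P_N\mathbf C\mathbf P_N$, bound it above by $\lVert\mathbf C\rVert$, and get the matching lower bound by applying $\mathbf P_N$ to a unit vector $u$ with $\langle u,\mathbf C u\rangle>\lVert\mathbf C\rVert-\delta$; your Cauchy--Schwarz proof of $\sup_{\lVert u\rVert=1}\langle u,\mathbf C u\rangle=\lVert\mathbf C\rVert$ and your positive-semidefiniteness argument for $\lVert A_N\rVert_2=\lambda_{\max}(A_N)$ spell out two facts the paper uses without comment.
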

\begin{proof}
    Let \(\mathbf P_N\) be the orthogonal projection onto \(\operatorname{span}\{e_1,\ldots,e_N\}\). Then \(A_N\) is the matrix of \(\big(\mathbf P_N\, \mathbf C \, \mathbf P_N\big)|_{\mathbf{P}_NH}\), so \(
    \big\lVert A_N \big\rVert_2 = \big\lVert \mathbf P_N\, \mathbf C \, \mathbf P_N \big\rVert. 
    \)
    Since \(\lVert \mathbf P_N\rVert = 1\), it holds that \(\lVert \mathbf P_N\, \mathbf C \, \mathbf P_N \rVert  \leq \lVert \mathbf C \rVert \). On the other hand, for any unit vector \(x \in H\) with \(\lVert \mathbf C \rVert -\varepsilon < \langle x,\mathbf C x\rangle \), one has \(\langle \mathbf P_N x, \mathbf C \mathbf P_N x\rangle \to \langle x,\mathbf C x \rangle \), thus  \(\lVert \mathbf C \rVert -\varepsilon \leq \liminf_N \, \lVert \mathbf{P}_N \, \mathbf C \, \mathbf P_N \rVert \).  Hence \(\lVert A_N \rVert_2 \to \lVert \mathbf C \rVert\). 
\end{proof}

\section{The massive case}\label{sec:massive}

We now consider free spinor fields in \((1+1)\) dimensions with mass \(m>0\). As in the massless case, the Bell problem reduces to a one-function variational problem on \(L^2([0,\infty))\). The relevant operator is now a Hankel operator with Bessel kernel. We show that the corresponding quadratic form is still asymptotically governed by the spectral edge \(\pi\), and we obtain an explicit family of smooth compactly supported, spacelike separated test functions whose Bell--CHSH values again converge to Tsirelson's bound \(2\sqrt 2\).

\subsection{Reduction to a Hankel operator}  

We again work with real-valued purely spatial test functions
\[
f,f' \in C_c^\infty\big( (-\infty,0); \mathbb R^2 \big), \qquad g,g'\in C_c^\infty\big( (0,\infty); \mathbb R^2 \big)
\] 
for Alice and Bob. 
 By Proposition~\ref{prop:app-bessel-pairing} in Appendix~\ref{sec:app}, the reduced purely spatial inner product between \(a \in \{f,f'\}\) and \(b\in\{g,g'\}\) is given as follows: 
\begin{align}\label{eq:massive-spatial-pairing}
\ip{a}{b}
&=
\frac{i}{\pi}
\iint_{(-\infty,0]\times [0,\infty)}
mK_1\bigl(m(y-x)\bigr)\bigl(a_1(x)b_1(y)-a_2(x)b_2(y)\bigr)\,\mathrm dx\,\mathrm dy, \end{align}
where \(K_1\) denotes the modified Bessel function of the second kind of order \(1\), for instance characterized for \(u>0\) by 
\begin{equation}\label{bessel:def}
K_1(u):=\int_0^\infty e^{-u\cosh t}\cosh t\,\mathrm dt,
\end{equation}
see \cite[\S6.22(5)]{Wat95}. 
The norms are still local and satisfy
\begin{equation}\label{eq:massive-spatial-norm}
\ip{a}{a}
=
\int_{-\infty}^0 \bigl(a_1(x)^2+a_2(x)^2\bigr)\,\mathrm dx,
\qquad
\ip{b}{b}
=
\int_0^\infty \bigl(b_1(y)^2+b_2(y)^2\bigr)\,\mathrm dy.
\end{equation}

We again construct normalized test functions \(f,f'\) and \(g,g'\) such that the absolute value of the correlator 
\[
\big\lvert \Braket{ \mathcal C }\big\rvert = \big\lvert \ip fg + \ip {f'}g + \ip f{g'} - \ip {f'}{g'}  \big\rvert
\]
is as close as possible to Tsirelson's bound \(2 \sqrt 2\). 

We retain the symmetry ansatz \eqref{eq:massless-ansatz}. The massive inner product formula \eqref{eq:massive-spatial-pairing} then implies the following analogue of Lemma~\ref{lemma:massless-carleman-reduction}.

\begin{lemma}\label{lemma:massive-hankel-reduction}
Let \(m>0\), and let \((f,f'),(g,g')\) satisfy \eqref{eq:massless-ansatz}. Then
\[
\bigl\lvert \Braket{ \mathcal C } \bigr\rvert
=
4\,\bigl\lvert \ip{f}{g} \bigr\rvert, \qquad \ip{f}{g}
=
\frac{1-c^2}{i\pi}\,\Braket{ g_1,\mathbf K_m g_1},
\]
where \(\mathbf K_m \in \mathcal B(H)\) is the integral operator
\[
(\mathbf K_m \phi)(x)
:=
\int_0^\infty mK_1\bigl(m(x+y)\bigr)\phi(y)\,\mathrm dy,
\qquad
\phi\in H:=L^2\big([0,\infty)\big).
\]
Moreover,
\begin{equation*}
\ip{f}{f} = \ip{f'}{f'} = \ip{g'}{g'} = \ip{g}{g}
=
(1+c^2)\,\big\lVert g_1\big\rVert^2.
\end{equation*}
\end{lemma}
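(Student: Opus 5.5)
The plan is to repeat the massless computations of Lemmas~\ref{lemma:massless-symmetry-reduction} and \ref{lemma:massless-carleman-reduction}, with the kernel $1/(y-x)$ replaced by $k(y-x):=mK_1\bigl(m(y-x)\bigr)$. The only feature of the kernel those arguments use is that the pairing \eqref{eq:massive-spatial-pairing} has the same bilinear structure as \eqref{eq:massless-spatial-pairing}. Write
\[
B(u,v):=\frac{i}{\pi}\iint_{(-\infty,0]\times[0,\infty)} k(y-x)\,u(x)v(y)\,\mathrm dx\,\mathrm dy .
\]
Then \eqref{eq:massive-spatial-pairing} reads $\ip ab = B(a_1,b_1)-B(a_2,b_2)$. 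Each of the four Bell pairings therefore reduces to a combination of $B(f_1,g_1)$ with scalar coefficients fixed by \eqref{eq:massless-ansatz}. Everything is real and the integrals converge absolutely: the supports are compact and separated, so $y-x\geq \delta>0$ on them, and $K_1$ is smooth and bounded on $[\delta,\infty)$. Hence linearity can be used freely.

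\emph{First step: the four pairings.} Substituting $f_2=-cf_1$, $g_2=-cg_1$, $f_1'=-f_2=cf_1$, $f_2'=f_1$, $g_1'=cg_1$ and $g_2'=g_1$ gives
\[
\ip fg=(1-c^2)B, \qquad \ip{f'}g=\ip f{g'}=2cB, \qquad \ip{f'}{g'}=(c^2-1)B,
\]
with $B=B(f_1,g_1)$. Since $c^2+2c-1=0$ we have $1-c^2=2c$, so all four pairings agree up to the sign on $\ip{f'}{g'}$. Hence $\bigl\lvert\Braket{\mathcal C}\bigr\rvert = 4\lvert\ip fg\rvert$.

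\emph{Second step: identifying $\mathbf K_m$.} Insert $f_1(x)=-g_1(-x)$ and change variables $x\mapsto -x$, which turns $y-x$ into $x+y$. This gives
\[
B(f_1,g_1)=-\frac{i}{\pi}\int_0^\infty\!\!\int_0^\infty mK_1\bigl(m(x+y)\bigr)g_1(x)g_1(y)\,\mathrm dx\,\mathrm dy=\frac{1}{i\pi}\Braket{g_1,\mathbf K_m g_1}.
\]
Multiplying by $1-c^2$ yields the stated formula for $\ip fg$.

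\emph{Third step: the norms.} These come directly from the local formula \eqref{eq:massive-spatial-norm}. For $g$ the integrand is $g_1^2+c^2g_1^2$, and for $g'$ it is $c^2g_1^2+g_1^2$. For $f$ and $f'$ the same integrands appear after reflecting $x\mapsto -x$. All four norms therefore equal $(1+c^2)\lVert g_1\rVert^2$.

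\emph{Main obstacle: $\mathbf K_m\in\mathcal B(H)$.} The algebra above is routine; what needs an argument is that $\mathbf K_m$ is a bounded operator, which the lemma asserts. I would bound the kernel pointwise by the Carleman kernel. From \eqref{bessel:def}, the function $uK_1(u)$ is decreasing in $u$, with $\lim_{u\to0^+}uK_1(u)=1$. For example, write $uK_1(u)=\int_0^\infty e^{-u\cosh t}\,u\cosh t\,\mathrm dt$ and use $K_1'=-K_0-K_1/u$, or use the integral representation $K_1(u)=u\int_1^\infty e^{-us}\sqrt{s^2-1}\,\mathrm ds$. Either route gives
\[
0< mK_1\bigl(m(x+y)\bigr)\leq \frac{1}{x+y}.
\]
Because the kernel is nonnegative, Schur's test (equivalently, comparison with $\mathbf C$ through $\lvert\phi\rvert$) then gives $\lVert\mathbf K_m\rVert\leq\lVert\mathbf C\rVert=\pi$ by Proposition~\ref{prop:carleman-spectrum}. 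Self-adjointness follows from the kernel being real and symmetric.
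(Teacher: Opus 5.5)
Your proposal is correct and matches the paper's proof: the paper also substitutes the ansatz \eqref{eq:massless-ansatz} directly into \eqref{eq:massive-spatial-pairing} and \eqref{eq:massive-spatial-norm}, using the reflection $x\mapsto -x$ to turn $y-x$ into $x+y$. Your extra argument that $\mathbf K_m\in\mathcal B(H)$ uses the bound $0<mK_1\bigl(m(x+y)\bigr)\le (x+y)^{-1}$ and compares with $\mathbf C$; the paper makes the same argument, using $(uK_1(u))'=-uK_0(u)<0$, but defers it to the proof of Proposition~\ref{prop:massive-spectrum}.
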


\begin{proof}
The identity \(\bigl\lvert \Braket{\mathcal C}\bigr\rvert = 4\,\lvert \ip{f}{g}\rvert\) is proved exactly as in Lemma~\ref{lemma:massless-symmetry-reduction}. The formulas for \(\ip f g\) and the norms follow from \eqref{eq:massive-spatial-pairing}, \eqref{eq:massive-spatial-norm} and the symmetry ansatz.
\end{proof}

It is convenient to reduce first to the case \(m=1\). Let \(\mathbf K:=\mathbf K_1\), and for \(m>0\) define the unitary dilation operator \(\mathbf U_m\colon H\to H\) by
\[
(\mathbf U_m\phi)(x):=m^{1/2}\phi(mx).
\]
Notice that \(\mathbf U_m\) preserves \(C_c^\infty \big( (0,\infty)\big)\). Moreover, one computes that
\begin{equation}\label{lemma:massive-unitary-equivalence}
\mathbf K_m = \mathbf U_m \mathbf K \mathbf U_m^{-1},
\end{equation}
hence \(\mathbf K_m\) and \(\mathbf K\) are unitarily equivalent. We record the spectral properties of \(\mathbf K_m\).

\begin{proposition}\label{prop:massive-spectrum}
For every \(m>0\), the Hankel operator \(\mathbf K_m\) is bounded, self-adjoint, and nonnegative on \(H\). Moreover, it has no singular continuous spectrum, its absolutely continuous part is unitarily equivalent to multiplication by \(\lambda\) on \(L^2([0,\pi])\) with multiplicity \(1\), and
\[
\big\lVert\mathbf K_m\big\rVert=\pi.
\]
\end{proposition}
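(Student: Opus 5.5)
By the unitary equivalence \eqref{lemma:massive-unitary-equivalence}, it suffices to treat \(m=1\), i.e.\ \(\mathbf K\) with kernel \(K_1(x+y)\). All three claims (self-adjointness, nonnegativity, and the spectral description) are invariant under unitary conjugation.

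\textbf{Positivity and self-adjointness.} I would use the integral representation \eqref{bessel:def}. For \(u>0\) it gives
\[
K_1(u)=\int_0^\infty e^{-u\cosh t}\cosh t\,\mathrm dt=\int_1^\infty e^{-u\lambda}\frac{\lambda}{\sqrt{\lambda^2-1}}\,\mathrm d\lambda .
\]
Hence for \(\phi\in C_c^\infty((0,\infty))\), by Fubini,
\[
\Braket{\phi,\mathbf K\phi}=\int_1^\infty \frac{\lambda}{\sqrt{\lambda^2-1}}\,\Bigl|\int_0^\infty e^{-\lambda x}\phi(x)\,\mathrm dx\Bigr|^2\mathrm d\lambda\ \ge 0 .
\]
The kernel is real and symmetric, so \(\mathbf K\) is self-adjoint once it is shown to be bounded.

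\textbf{Boundedness and norm.} I would compare pointwise with the Carleman kernel. From the \(\lambda\)-representation and \(\frac{1}{u}=\int_0^\infty e^{-u\lambda}\,\mathrm d\lambda\), it is natural to expect \(0<K_1(u)\le 1/u\). A clean proof: the function \(uK_1(u)\) is decreasing, since \((uK_1)'=-uK_0<0\), and it tends to \(1\) as \(u\to0^+\). Since the kernel \(0\le K_1(x+y)\le (x+y)^{-1}\) is dominated entrywise by the Carleman kernel, Schur-type domination gives, for any \(\phi\),
\[
|\Braket{\phi,\mathbf K\phi}|\le \Braket{|\phi|,\mathbf C|\phi|}\le\pi\|\phi\|^2,
\]
so \(\|\mathbf K\|\le\pi\) by Proposition~\ref{prop:carleman-spectrum}.

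For the lower bound I would use the massless approximate eigenfunctions \(\phi^{(\varepsilon)}\) of Lemma~\ref{lemma:massless-approx-eigenfunctions}, translated to concentrate near the origin. Take \(\mathbf U_{M}\phi^{(\varepsilon)}\) with large \(M\) (the dilation of \eqref{lemma:massive-unitary-equivalence}), which equals \(\mathbf K_{1/M}\)-type scaling, and write
\[
\Braket{\mathbf U_M\phi,\mathbf K\mathbf U_M\phi}=\iint \tfrac1M K_1\!\bigl(\tfrac{x+y}{M}\bigr)\phi(x)\phi(y)\,\mathrm dx\,\mathrm dy .
\]
For fixed compactly supported \(\phi\), the factor \(\frac{x+y}{M}K_1\bigl(\frac{x+y}{M}\bigr)\to1\) uniformly on the support, so the right-hand side tends to \(\Braket{\phi,\mathbf C\phi}\) as \(M\to\infty\). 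Choosing \(\varepsilon\) and then \(M\) yields Rayleigh quotients arbitrarily close to \(\pi\), hence \(\|\mathbf K\|=\pi\). This argument also supplies the explicit damped test functions used later.

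\textbf{Spectral type.} This is the main obstacle: proving the absence of singular continuous spectrum and that the absolutely continuous part is \([0,\pi]\) with multiplicity \(1\), as opposed to multiplicity \(2\) for \(\mathbf C\). I would not prove this from scratch. Instead I would invoke Yafaev's theory of Hankel operators with kernels \(h(x+y)\) whose "sigma-function" (here \(\sigma(\lambda)=\lambda(\lambda^2-1)^{-1/2}\mathbf 1_{\lambda>1}\)) has a singularity only at one end. Near \(u=0\), \(K_1(u)\sim 1/u\) reproduces the Carleman singularity at a single end, while the decay \(K_1(u)\sim\sqrt{\pi/2u}\,e^{-u}\) at infinity removes the second Carleman channel; cf.\ \cite{yafaev2014spectral,Pel03} and the smooth-perturbation/Mellin-symbol results there.

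Concretely, the difference \(\mathbf K-\mathbf C\chi\) (with \(\chi\) a cutoff near \(0\)) is a Hankel operator whose kernel is less singular, and trace-class or smooth-perturbation methods give existence and completeness of wave operators. This identifies the absolutely continuous part with one copy of the Carleman a.c.\ spectrum, and a limiting absorption principle excludes singular continuous spectrum. I would cite this step rather than reproduce it.
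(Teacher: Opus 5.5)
Your proposal is correct. Three of its steps coincide with the paper's:
\begin{itemize}
\item the reduction to $m=1$;
\item nonnegativity via $K_1(u)=\int_1^\infty e^{-u\lambda}\lambda(\lambda^2-1)^{-1/2}\,\mathrm d\lambda$;
\item the upper bound $\lVert\mathbf K\rVert\le\pi$, from $(uK_1)'=-uK_0<0$, $uK_1(u)\to1$ and domination of the kernel by the Carleman kernel.
\end{itemize}

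The routes differ in how the rest is obtained. The paper writes $\mathbf K=H(k)$ with $k(u)=uK_1(u)$ and checks $k(0^+)=1$, $k(\infty)=0$ and the bounds on $k''=k-K_0$. A single citation (\cite[Prop.~1.1 \& Th.~2]{howland1992spectral}) then supplies boundedness and the spectral type. It also gives the lower bound $\lVert\mathbf K\rVert\ge\pi$, because the a.c.\ part contains $[0,\pi]$.

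You instead get boundedness directly from the kernel domination. You get $\lVert\mathbf K\rVert\ge\pi$ from the dilation identity $\langle\mathbf U_M\phi,\mathbf K\mathbf U_M\phi\rangle=\iint (x+y)^{-1}k\bigl((x+y)/M\bigr)\phi(x)\phi(y)\,\mathrm dx\,\mathrm dy\to\langle\phi,\mathbf C\phi\rangle$ as $M\to\infty$. This is valid, even for every $\phi\in H$ by dominated convergence, since $0\le k\le1$. What this buys you is a proof of the norm identity that uses only $\lVert\mathbf C\rVert=\pi$ and $k(0^+)=1$, independent of Howland's theory. It also yields an alternative explicit family $\mathbf U_M\phi^{(\varepsilon)}$ for the massive Bell problem. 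Note, however, that these are dilated undamped cutoffs, not the exponentially damped $\Phi^{(\varepsilon)}$ used later, so your closing remark in that step is inaccurate.

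For the spectral type (no singular continuous spectrum, a.c.\ part of multiplicity one on $[0,\pi]$), both you and the paper rest on the literature. You should anchor this to one precise theorem and verify its hypotheses for $k(u)=uK_1(u)$, as the paper does with Howland's Theorem~2. Results in \cite{yafaev2014spectral} for kernels behaving like $h_0/t$ at $0$ and $h_\infty/t$ at $\infty$ would serve, if stated precisely.

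Your sketch does not stand on its own:
\begin{itemize}
\item ``$\mathbf C\chi$'' has to be read as the Hankel operator with kernel $\chi(x+y)/(x+y)$.
\item Even granting a trace-class difference, Kato--Rosenblum only transfers the a.c.\ part to that comparison operator. You would still have to show that its a.c.\ spectrum is $[0,\pi]$ with multiplicity one.
\item Trace-class perturbation theory gives no control of singular continuous spectrum. That is exactly what the smooth, limiting-absorption results of Howland or Yafaev supply, and why the citation is essential there.
\end{itemize}
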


\begin{proof}
By \eqref{lemma:massive-unitary-equivalence}, it suffices to consider \(m=1\). Write
\[
(\mathbf K\phi)(x)
=
\int_0^\infty K_1(x+y)\phi(y)\,\mathrm dy
=
\int_0^\infty \frac{k(x+y)}{x+y}\phi(y)\,\mathrm dy,
\qquad
k(u):=uK_1(u).
\]
Thus \(\mathbf K=H(k)\) in the notation of \cite{howland1992spectral}. Standard properties of the modified Bessel functions (see \cite[\S3.71, \S7.23]{Wat95}) show that \(k\) is real-valued, bounded, and \(C^\infty\) on \((0,\infty)\), that
\begin{equation*}
\lim_{u\to0^+}k(u) =\lim_{u\to0^+} \big(1 + O(u^2|\log u|)\big) =1, \qquad K_1(u) \sim \left(\frac{\pi}{2u}\right)^{1/2} e^{-u} \qquad \text{as }u \to \infty,\end{equation*}
so \(\lim_{u\to\infty}k(u)=   0\). Similar considerations give 
\[
k''(u) = k(u) - K_0(u) = O(|\log u| ) \quad \text{as } u\to 0^+, \qquad k''(u) = O\big( u^{1/2} e^{-u} \big)\quad \text{as } u\to \infty,
\]
hence \(k''\) satisfies the hypotheses \((2.1)\)--\((2.2)\) of \cite[Th.~2]{howland1992spectral}. Hence \cite[Prop.~1.1 \& Th.~2]{howland1992spectral} imply that \(\mathbf K\) is bounded, has no singular continuous spectrum, and that its absolutely continuous part is unitarily equivalent to multiplication by \(\lambda\) on \(L^2([0,\pi])\) with multiplicity \(1\). Since the kernel \(K_1(x+y)\) is real and symmetric, \(\mathbf K\) is self-adjoint.

To prove nonnegativity, use the integral representation \eqref{bessel:def}, apply the substitution \(s = \cosh t\), and deduce that 
\[
\Braket{ \phi,\mathbf K\phi}
=
\int_1^\infty \frac{t}{\sqrt{t^2-1}}
\left|
\int_0^\infty e^{-tx}\phi(x)\,\mathrm dx
\right|^2
\mathrm dt
\ge 0, \qquad \phi\in C_c^\infty((0,\infty)). 
\]
By density of \(C_c^\infty((0,\infty))\) in \(H\) and boundedness of \(\mathbf K\), this extends to all \(\phi\in H\). Thus \(\mathbf K\ge 0\). Since the absolutely continuous part contains \([0,\pi]\), we obtain \(\lVert\mathbf K\rVert \ge \pi\).

On the other hand, the function \(k(u)=uK_1(u)\) satisfies
\[
k'(u)=\frac{\mathrm d}{\mathrm du}\bigl(uK_1(u)\bigr)=-uK_0(u)<0, \qquad u > 0,
\]
and \(\lim_{u\to0^+}k(u)=1\). Thus
\(
0<K_1(u)\le 1 /u \) for \(u > 0\).  
Therefore, for every \(\phi\in H\),
\[
0\le \Braket{ \phi,\mathbf K\phi}
\le
\int_0^\infty\int_0^\infty \frac{|\phi(x)|\,|\phi(y)|}{x+y}\,\mathrm dx\,\mathrm dy
=
\Braket{ |\phi|,\mathbf C|\phi|}
\le
\pi\big\lVert \phi\big\rVert^2,
\]
by Proposition~\ref{prop:carleman-spectrum}. Thus \(\lVert \mathbf K\rVert\le \pi\), and so \(\lVert \mathbf K\rVert =\pi\).
\end{proof}

As in the massless case, near-maximal Bell violation reduces to the construction of normalized functions whose \(\mathbf K_m\)-quadratic form is close to \(\pi\).

\begin{proposition}\label{prop:massive-reduction}
Let \(m>0\). Suppose that \(\Phi_m^{(\varepsilon)} \in C_c^\infty((0,\infty))\) is \(L^2\)-normalized and satisfies
\[
\Braket{ \Phi_m^{(\varepsilon)},\mathbf K_m \Phi_m^{(\varepsilon)}}
\longrightarrow \pi
\qquad
\text{as }\varepsilon\to0^+.
\]
Define
\[
g_1^{(\varepsilon)}
:=
\frac{1}{\sqrt{1+c^2}}\,\Phi_m^{(\varepsilon)},
\qquad
g_2^{(\varepsilon)}
:=
-\frac{c}{\sqrt{1+c^2}}\,\Phi_m^{(\varepsilon)},
\]
where \(c = \sqrt 2 -1\), and define \(g'{}^{(\varepsilon)},f^{(\varepsilon)},f'{}^{(\varepsilon)}\) by \eqref{eq:massless-ansatz}. Then
\[
\ip{f^{(\varepsilon)}}{f^{(\varepsilon)}}=\ip{f'^{(\varepsilon)}}{f'^{(\varepsilon)}}=\ip{g'^{(\varepsilon)}}{g'^{(\varepsilon)}}=\ip{g^{(\varepsilon)}}{g^{(\varepsilon)}}=1
\]
and
\[
\bigl\lvert \ip{f^{(\varepsilon)}}{g^{(\varepsilon)}} \bigr\rvert
=
\frac{1-c^2}{\pi(1+c^2)}
\Braket{ \Phi_m^{(\varepsilon)},\mathbf K_m \Phi_m^{(\varepsilon)}}
\longrightarrow
\frac{1-c^2}{1+c^2}
=
\frac{\sqrt2}{2}.
\]
In particular,
\[
\left\lvert \Braket{ \mathcal C^{(\varepsilon)} } \right\rvert
\longrightarrow
2\sqrt2.
\]
\end{proposition}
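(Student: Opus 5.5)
This is the massive analogue of Proposition~\ref{prop:massless-reduction}, and I would prove it the same way, now using Lemma~\ref{lemma:massive-hankel-reduction} in place of Lemmas~\ref{lemma:massless-symmetry-reduction} and~\ref{lemma:massless-carleman-reduction}. The only inputs are the symmetry ansatz \eqref{eq:massless-ansatz} and the algebraic identity $c^2+2c-1=0$.

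First I check that the constructed functions satisfy \eqref{eq:massless-ansatz} and are admissible. By definition $g_2^{(\varepsilon)}=-c\,g_1^{(\varepsilon)}$. The remaining components are determined by the ansatz: $g'$ by $g_2'=g_1$ and $g_1'=-g_2$; then $f_1(x)=-g_1(-x)$ and $f_2=-cf_1$; and finally $f'$ by $f_2'=f_1$ and $f_1'=-f_2$. Since $\Phi_m^{(\varepsilon)}\in C_c^\infty((0,\infty))$ and is real-valued (as I assume, taking the hypothesis in the real setting of this section), Bob's functions lie in $C_c^\infty((0,\infty);\mathbb R^2)$ and Alice's, being reflections, lie in $C_c^\infty((-\infty,0);\mathbb R^2)$. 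So the hypotheses of Lemma~\ref{lemma:massive-hankel-reduction} hold.

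For the norms, Lemma~\ref{lemma:massive-hankel-reduction} gives
\[
\ip{f}{f}=\ip{f'}{f'}=\ip{g'}{g'}=\ip{g}{g}=(1+c^2)\,\|g_1^{(\varepsilon)}\|^2=(1+c^2)\cdot\frac{1}{1+c^2}\,\|\Phi_m^{(\varepsilon)}\|^2=1 .
\]
For the pairing, the same lemma gives
\[
\ip{f}{g}=\frac{1-c^2}{i\pi}\cdot\frac{1}{1+c^2}\,\Braket{\Phi_m^{(\varepsilon)},\mathbf K_m\Phi_m^{(\varepsilon)}} .
\]
The quadratic form is real and nonnegative by Proposition~\ref{prop:massive-spectrum}, so taking absolute values yields the displayed formula. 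Letting $\varepsilon\to0^+$ and using the hypothesis, the limit is $(1-c^2)/(1+c^2)$.

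It remains to evaluate this constant. Using $1-c^2=2c$, it equals $2c/(1+c^2)$. With $c=\sqrt2-1$ we have $c^2=3-2\sqrt2$, so $1+c^2=2\sqrt2\,(\sqrt2-1)=2\sqrt2\,c$, and the quotient is $1/\sqrt2=\sqrt2/2$. Finally, Lemma~\ref{lemma:massive-hankel-reduction} gives $|\langle\mathcal C^{(\varepsilon)}\rangle|=4\,|\ip{f}{g}|\to 2\sqrt2$.

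There is no real obstacle here; the only point needing care is the first step, confirming that the explicitly constructed $f,f',g'$ genuinely satisfy every relation in \eqref{eq:massless-ansatz}, since the rest is direct substitution into Lemma~\ref{lemma:massive-hankel-reduction}.
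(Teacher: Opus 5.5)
Your proposal is correct and takes the same approach as the paper, which simply states that the result is immediate from Lemma~\ref{lemma:massive-hankel-reduction}. You carry out that substitution explicitly. You also supply the two details the paper leaves implicit: the nonnegativity of the quadratic form from Proposition~\ref{prop:massive-spectrum}, which lets you remove the absolute value, and the arithmetic $1-c^2=2c$, $1+c^2=2\sqrt2\,c$, which gives $\sqrt2/2$.
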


\begin{proof}
This is immediate from Lemma~\ref{lemma:massive-hankel-reduction}.
\end{proof}

\subsection{Exponentially damped massive family}

We now construct approximate eigenfunctions for \(\mathbf K\) directly from the massless family of Section~\ref{sec:free}. For \(0<\varepsilon<1\), let \(\widetilde\phi^{(\varepsilon)}\) be the unnormalized cutoff function introduced in Section~\ref{sec:free}, and define 
\begin{center}
\begin{minipage}[t]{0.45\textwidth}
\vspace{-2.5cm}
\begin{eqnarray*}
\widetilde\Phi^{(\varepsilon)}(x)&:=&e^{-x}\widetilde\phi^{(\varepsilon)}(x),\\
\Phi^{(\varepsilon)}&:=&\frac{\widetilde\Phi^{(\varepsilon)}}{\big\lVert \widetilde\Phi^{(\varepsilon)}\big\rVert}.
\end{eqnarray*}
\end{minipage} \begin{minipage}[t]{0.35\textwidth}
\centering
\begin{tikzpicture}[scale=0.7]

  \draw[red, very thick] (0.8,0) -- (1.5,0);
  \draw[red, very thick] (6.4,0) -- (6.8,0);

  \draw[red, very thick]
    plot[smooth] coordinates {
      (1.40,0)
      (1.50,0.01)
      (1.54,0.018)
      (1.66,0.090)
      (1.80,0.220)
      (1.92,0.430)
      (2.02,0.720)
      (2.10,0.950)
      (2.16,1.060)
      (2.20,1.10)
    };

  \draw[red, very thick]
    plot[smooth] coordinates {
      (2.20,1.10)
      (2.40,0.92)
      (2.65,0.72)
      (2.95,0.54)
      (3.25,0.40)
      (3.55,0.29)
      (3.90,0.21)
      (4.25,0.15)
      (4.60,0.11)
      (4.90,0.085)
      (5.20,0.065)
    };

  \draw[red, very thick]
    plot[smooth] coordinates {
      (5.20,0.065)
      (5.40,0.055)
      (5.60,0.043)
      (5.80,0.031)
      (6.00,0.020)
      (6.18,0.010)
      (6.30,0.004)
      (6.40,0)
    };

  \draw[dashed] (1.5,0) -- (1.5,0.45);
  \draw[dashed] (2.2,0) -- (2.2,1.10);
  \draw[dashed] (5.2,0) -- (5.2,0.20);
  \draw[dashed] (6.4,0) -- (6.4,0.20);

  \draw[->] (0.4,0) -- (7.0,0);
  \draw[->] (0.8,-0.2) -- (0.75,2.3);

  \node[red] at (4.15,1.25) {$\widetilde\Phi^{(\varepsilon)}$};

  \draw (1.5,-0.05) node[below] {$\frac{\varepsilon}{2}$};
  \draw (2.2,-0.05) node[below] {$\varepsilon$};
  \draw (5.2,-0.05) node[below] {$\frac{1}{\varepsilon}$};
  \draw (6.4,-0.05) node[below] {$\frac{2}{\varepsilon}$};

\end{tikzpicture}
\end{minipage}
\end{center}
Then \(\Phi^{(\varepsilon)}\in C_c^\infty((0,\infty))\), \(\big\lVert\Phi^{(\varepsilon)}\big\rVert=1\), and \(\Phi^{(\varepsilon)}\) vanishes identically on \([0,\varepsilon/2]\).

The reason for this choice is the bounds
\begin{equation}\label{eq:boundsK}
\frac{e^{-u}}{u}\leq K_1(u)\leq \frac{1}{u},
\qquad
u>0,
\end{equation}
where the lower one easily follows from the \(K_1\) integral representation \eqref{bessel:def} after applying the substitution \(s = \cosh t\). These bounds show that the quadratic form of \(\mathbf K\) is squeezed between the standard and an exponentially weighted Carleman-type form.

\begin{lemma}\label{lemma:massive-approx-eigenfunctions}
The family \(\Phi^{(\varepsilon)}\) satisfies
\[
\Braket{ \Phi^{(\varepsilon)},\mathbf K \Phi^{(\varepsilon)}}
\longrightarrow \pi
\qquad
\text{as } \varepsilon\to0^+.
\]
\end{lemma}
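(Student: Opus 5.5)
The plan is to squeeze the Rayleigh quotient between $\pi$ from above and a Carleman-type lower bound that tends to $\pi$. The upper bound is free: $\lVert\Phi^{(\varepsilon)}\rVert=1$ and Proposition~\ref{prop:massive-spectrum} give $\langle \Phi^{(\varepsilon)},\mathbf K\Phi^{(\varepsilon)}\rangle\le\pi$.

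For the lower bound I would use that $\Phi^{(\varepsilon)}\ge0$, because $\tau^{(\varepsilon)}\ge 0$. Together with the lower bound in \eqref{eq:boundsK} this gives
\[
\Braket{\Phi^{(\varepsilon)},\mathbf K\Phi^{(\varepsilon)}}\ \ge\ \frac{1}{\lVert\widetilde\Phi^{(\varepsilon)}\rVert^2}\int_0^\infty\!\!\int_0^\infty \frac{e^{-2x}e^{-2y}\,\widetilde\phi^{(\varepsilon)}(x)\widetilde\phi^{(\varepsilon)}(y)}{x+y}\,\mathrm dx\,\mathrm dy .
\]
Here the factor $e^{-2x}$ arises as $e^{-x}$ from $\widetilde\Phi^{(\varepsilon)}$ times $e^{-x}$ from the kernel bound.

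Next I would pass to logarithmic variables $x=e^s$, $y=e^t$ exactly as in the proof of Lemma~\ref{lemma:massless-approx-eigenfunctions}, with $\widetilde\psi^{(\varepsilon)}(s)=e^{s/2}\widetilde\phi^{(\varepsilon)}(e^s)$. Define
\[
a(t):=e^{-e^t}\widetilde\psi^{(\varepsilon)}(t),\qquad b(t):=e^{-2e^t}\widetilde\psi^{(\varepsilon)}(t).
\]
Then $\lVert\widetilde\Phi^{(\varepsilon)}\rVert^2=\lVert a\rVert^2$, and the right-hand side above becomes
\[
\int_{\mathbb R} h(r)\,\frac{V^{(\varepsilon)}(r)}{\lVert a\rVert^2}\,\mathrm dr,\qquad V^{(\varepsilon)}(r):=\int_{\mathbb R} b(t+r)\,b(t)\,\mathrm dt,\qquad h(r)=\frac{1}{2\cosh(r/2)}.
\]
Since $0\le b\le a$ pointwise, Cauchy--Schwarz gives $0\le V^{(\varepsilon)}(r)\le\lVert b\rVert^2\le\lVert a\rVert^2$. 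So the integrand is dominated by $h$, which is integrable with $\int h=\pi$. By dominated convergence it then suffices to show $V^{(\varepsilon)}(r)/\lVert a\rVert^2\to1$ for each fixed $r$.

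This pointwise convergence is the main step; the difficulty is that the exponential damping destroys the flat plateau on the right half $s\in[0,T]$. I would handle it with a two-parameter argument. As before, let $T=\log(1/\varepsilon)$, fix $M>0$, and take $\varepsilon$ small so that $T>M+|r|$.

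For the denominator, use $0\le\widetilde\psi^{(\varepsilon)}\le1$ with support in $[-T-\log2,\infty)$ to get
\[
\lVert a\rVert^2\le T+\log 2+C,\qquad C:=\int_0^\infty e^{-2e^t}\,\mathrm dt<\infty .
\]
For the numerator, note that on $t,t+r\in[-T,-M]$ one has $\widetilde\psi^{(\varepsilon)}=1$ and $e^{-2e^{t}}\ge e^{-2e^{-M}}$. Hence
\[
V^{(\varepsilon)}(r)\ \ge\ e^{-4e^{-M}}\,(T-M-|r|).
\]
Dividing and letting $\varepsilon\to0^+$ (so $T\to\infty$) gives $\liminf V^{(\varepsilon)}(r)/\lVert a\rVert^2\ge e^{-4e^{-M}}$. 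Letting $M\to\infty$ then gives the limit $1$, which with the upper bound proves the lemma.
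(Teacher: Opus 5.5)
Your proof is correct and follows essentially the same route as the paper: you sandwich the quadratic form using the bounds \eqref{eq:boundsK} on $K_1$, pass to logarithmic variables to obtain $\int h(r)\,W_2^{(\varepsilon)}(r)/N_\varepsilon\,\mathrm dr$, and conclude by dominated convergence together with a plateau-overlap estimate. The only differences are minor: you take the upper bound directly from $\lVert\mathbf K\rVert=\pi$ rather than from $I_1^{(\varepsilon)}$, and your two-parameter estimate (first $\varepsilon\to0^+$ at fixed $M$, then $M\to\infty$) spells out what the paper compresses into the claim $W_\beta^{(\varepsilon)}(r)=T+O(1)$.
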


\begin{proof} We adapt the proof of Lemma~\ref{lemma:massless-approx-eigenfunctions}. The argument is the same after the logarithmic change of variables, except for the additional exponential weight induced by the bounds on \(K_1\). 
Set
\( 
N_\varepsilon:=\bigl\|\widetilde\Phi^{(\varepsilon)}\bigr\|^2.
\) 
Since
\(
\widetilde\Phi^{(\varepsilon)}(x)=e^{-x}\widetilde\phi^{(\varepsilon)}(x),
\)
the bounds \eqref{eq:boundsK}
imply
\begin{equation} \label{eq:massive-sandwich}
I_{2}^{(\varepsilon)} 
\le
\Braket{\Phi^{(\varepsilon)},\mathbf K\Phi^{(\varepsilon)}}
\le
I_{1}^{(\varepsilon)}, 
\end{equation}
where for \(\beta \in \{1,2\}\), we define the integral \(I_{\beta}^{(\varepsilon)}:=\frac{1}{N_\varepsilon}
\int_{0}^\infty \int_0^\infty \frac{e^{-\beta(x+y)}}{x+y}\,
\widetilde\phi^{(\varepsilon)}(x)\widetilde\phi^{(\varepsilon)}(y)\,
\mathrm dx\,\mathrm dy \). 
We argue that both bounds in \eqref{eq:massive-sandwich} converge to \(\pi\). As in the proof of Lemma~\ref{lemma:massless-approx-eigenfunctions}, put 
\[
T:=\log(1/\varepsilon),
\qquad
\widetilde\psi^{(\varepsilon)}(s):=e^{s/2}\widetilde\phi^{(\varepsilon)}(e^s),
\]
and for \(\beta \in \{1,2\}\) define
\[
\Psi_\beta^{(\varepsilon)}(s):=e^{-\beta e^s}\widetilde\psi^{(\varepsilon)}(s),
\qquad
W_\beta^{(\varepsilon)}(r):=
\int_{-\infty}^\infty
\Psi_\beta^{(\varepsilon)}(t+r)\Psi_\beta^{(\varepsilon)}(t)\,\mathrm dt.
\]
Then
\(
N_\varepsilon=\bigl\|\Psi_1^{(\varepsilon)}\bigr\|^2.
\) 

Since \(0\le \widetilde\psi^{(\varepsilon)}\le 1\), since \(\widetilde\psi^{(\varepsilon)}=1\) on \([-T,T]\), and since \(\widetilde\psi^{(\varepsilon)}\) is supported on
\[
 [-(T+\log 2),\,T+\log 2],
\]
the same overlap argument as in the massless case, now with the weight \(e^{-\beta e^s}\), implies that 
\[ 
N_\varepsilon=T+O(1), \qquad  W_\beta^{(\varepsilon)}(r)=T+O(1) \qquad \text{as }\varepsilon \to 0^+,\]
for fixed \(\beta \in \{1,2\}, r \in \mathbb{R}\). 
Hence
\[
\frac{W_\beta^{(\varepsilon)}(r)}{N_\varepsilon}\longrightarrow 1
\qquad
\text{for every fixed } r\in\mathbb R.
\]

After the change of variables \(x=e^s\), \(y=e^t\), the bounds in \eqref{eq:massive-sandwich} become
\[
I_1^{(\varepsilon)} =\int_{-\infty}^\infty h(r)\,\frac{W_1^{(\varepsilon)}(r)}{N_\varepsilon}\,\mathrm dr,
\qquad I_2^{(\varepsilon)} =\int_{-\infty}^\infty h(r)\,\frac{W_2^{(\varepsilon)}(r)}{N_\varepsilon}\,\mathrm dr,
\qquad
h(r):=\frac{1}{2\cosh(r/2)}.
\]
Moreover,
\(
0\le W_2^{(\varepsilon)}(r)\le W_1^{(\varepsilon)}(r)\le N_\varepsilon,
\) 
so dominated convergence yields
\[
\int_{-\infty}^\infty h(r)\,\frac{W_\beta^{(\varepsilon)}(r)}{N_\varepsilon}\,\mathrm dr
\longrightarrow
\int_{-\infty}^\infty h(r)\,\mathrm dr
=
\pi
\]
for $\beta\in\{1,2\}$. Applying \eqref{eq:massive-sandwich} yields the desired conclusion. 
\end{proof}

\subsection{Near-Tsirelson Bell--CHSH violations}

We now transfer the family from Lemma~\ref{lemma:massive-approx-eigenfunctions} to arbitrary mass \(m>0\). Define
\[
\Phi_m^{(\varepsilon)}:=\mathbf U_m\Phi^{(\varepsilon)}.
\]
Then \(\big\lVert \Phi_m^{(\varepsilon)}\big\rVert=1\), and \eqref{lemma:massive-unitary-equivalence} gives
\[
\Braket{ \Phi_m^{(\varepsilon)},\mathbf K_m \Phi_m^{(\varepsilon)}}
=
\Braket{ \Phi^{(\varepsilon)},\mathbf K \Phi^{(\varepsilon)}}
\longrightarrow \pi
\qquad
\text{as } \varepsilon\to0^+.
\]

For fixed \(m>0\), define Bob's test functions \((g,g')\) by
\begin{align*}
g_1^{(\varepsilon)}(x) &:= \frac{1}{\sqrt{1+c^2}}\,\Phi_m^{(\varepsilon)}(x), &
g_2^{(\varepsilon)}(x) &:= -\frac{c}{\sqrt{1+c^2}}\,\Phi_m^{(\varepsilon)}(x), \\
g_1'{}^{(\varepsilon)}(x) &:= \frac{c}{\sqrt{1+c^2}}\,\Phi_m^{(\varepsilon)}(x), &
g_2'{}^{(\varepsilon)}(x) &:= \frac{1}{\sqrt{1+c^2}}\,\Phi_m^{(\varepsilon)}(x),
\end{align*}
for \(x\geq 0\), and Alice's test functions \((f,f')\) by
\begin{align*}
f_1^{(\varepsilon)}(x) &:= -\frac{1}{\sqrt{1+c^2}}\,\Phi_m^{(\varepsilon)}(-x), &
f_2^{(\varepsilon)}(x) &:= \frac{c}{\sqrt{1+c^2}}\,\Phi_m^{(\varepsilon)}(-x), \\
f_1'{}^{(\varepsilon)}(x) &:= -\frac{c}{\sqrt{1+c^2}}\,\Phi_m^{(\varepsilon)}(-x), &
f_2'{}^{(\varepsilon)}(x) &:= -\frac{1}{\sqrt{1+c^2}}\,\Phi_m^{(\varepsilon)}(-x),
\end{align*}
for \(x\leq 0\).

\begin{theorem}\label{thm:massive-near-tsirelson}
For every \(m>0\), the test functions
\(
(f^{(\varepsilon)},f'{}^{(\varepsilon)}),\) \((g^{(\varepsilon)},g'{}^{(\varepsilon)})
\) are smooth and compactly supported on strictly separated half-lines, satisfy
\[
\ip{f^{(\varepsilon)}}{f^{(\varepsilon)}}
=
\ip{f'{}^{(\varepsilon)}}{f'{}^{(\varepsilon)}}
=
\ip{g^{(\varepsilon)}}{g^{(\varepsilon)}}
=
\ip{g'{}^{(\varepsilon)}}{g'{}^{(\varepsilon)}}
=1,
\]
and yield Bell--CHSH violations
\[
\left| \Braket{ \mathcal C^{(\varepsilon)} } \right|
\longrightarrow
2\sqrt2
\qquad
\text{as } \varepsilon\to0^+.
\]
\end{theorem}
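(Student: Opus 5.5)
The plan is to mirror the proof of Theorem~\ref{thm:massless-near-tsirelson}, with Proposition~\ref{prop:massive-reduction} in place of Proposition~\ref{prop:massless-reduction}. The family \(\Phi_m^{(\varepsilon)}=\mathbf U_m\Phi^{(\varepsilon)}\) has already been shown to be \(L^2\)-normalized. By Lemma~\ref{lemma:massive-approx-eigenfunctions} and the unitary equivalence \eqref{lemma:massive-unitary-equivalence}, it satisfies \(\Braket{\Phi_m^{(\varepsilon)},\mathbf K_m\Phi_m^{(\varepsilon)}}\to\pi\). It therefore suffices to verify the hypotheses of Proposition~\ref{prop:massive-reduction} and to check that the explicitly defined test functions coincide with those that proposition produces.

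First, I will check regularity and support. \(\widetilde\phi^{(\varepsilon)}\in C_c^\infty((0,\infty))\) vanishes on \([0,\varepsilon/2]\), and multiplication by the smooth function \(e^{-x}\) preserves both properties. Hence \(\Phi^{(\varepsilon)}\in C_c^\infty((0,\infty))\), and since \(\widetilde\phi^{(\varepsilon)}\not\equiv0\) the normalization is well defined. The dilation \(\mathbf U_m\) maps \(C_c^\infty((0,\infty))\) to itself, and \(\Phi_m^{(\varepsilon)}\) vanishes on \([0,\varepsilon/(2m)]\). Consequently Bob's functions are supported in \([\varepsilon/(2m),2/(m\varepsilon)]\) and Alice's in the mirror interval in \((-\infty,0)\). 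These are strictly separated half-lines, which gives spacelike separation, and all components are smooth and real-valued.

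Second, I will confirm the ansatz \eqref{eq:massless-ansatz}. With \(g_1=\Phi_m^{(\varepsilon)}/\sqrt{1+c^2}\), direct reading gives \(g_2=-cg_1\), \(g_1'=cg_1=-g_2\), \(g_2'=g_1\), and \(f_1(x)=-g_1(-x)\), \(f_2=-cf_1\), \(f_1'=-f_2\), \(f_2'=f_1\). These are precisely the functions built in Proposition~\ref{prop:massive-reduction}. That proposition then gives all four norms equal to \(1\), because \((1+c^2)\lVert g_1\rVert^2=\lVert\Phi_m^{(\varepsilon)}\rVert^2=1\) by Lemma~\ref{lemma:massive-hankel-reduction}. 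It also gives \(|\ip{f^{(\varepsilon)}}{g^{(\varepsilon)}}|\to\sqrt2/2\), and hence \(|\Braket{\mathcal C^{(\varepsilon)}}|=4|\ip{f^{(\varepsilon)}}{g^{(\varepsilon)}}|\to2\sqrt2\).

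The only point requiring care is the identity \(\Braket{\mathbf U_m\phi,\mathbf K_m\mathbf U_m\phi}=\Braket{\phi,\mathbf K\phi}\). This follows from \eqref{lemma:massive-unitary-equivalence} and the unitarity of \(\mathbf U_m\). It can also be checked directly by substituting \(x\mapsto x/m\), \(y\mapsto y/m\) in the kernel \(mK_1(m(x+y))\): the factor \(m\cdot m^{-2}\cdot m\) equals \(1\). No real obstacle is expected, since all the analytic work sits in Lemma~\ref{lemma:massive-approx-eigenfunctions}.
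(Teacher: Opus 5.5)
Your proposal is correct and follows the paper's proof, which simply notes that \(\Phi_m^{(\varepsilon)}=\mathbf U_m\Phi^{(\varepsilon)}\) satisfies the hypotheses of Proposition~\ref{prop:massive-reduction}, by Lemma~\ref{lemma:massive-approx-eigenfunctions} and \eqref{lemma:massive-unitary-equivalence}. Your extra checks make explicit what the paper leaves implicit: that the explicit test functions satisfy the ansatz, and that the dilated supports lie in \([\varepsilon/(2m),2/(m\varepsilon)]\) and its mirror image, so the separation stays strictly positive as the appendix limit requires.
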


\begin{proof}
By the displayed convergence above, the family \(\Phi_m^{(\varepsilon)}\) satisfies the hypothesis of Proposition~\ref{prop:massive-reduction}. The conclusion follows.
\end{proof}

\section{Outlook}\label{sec:out}

The present paper gives a constructive reformulation of the Bell--CHSH problem for free \((1+1)\)-dimensional spinor fields in terms of integral operators on the half-line. The general analysis of \cite{SummersI1987,SummersII1987,Summers1987} is formulated in the algebraic QFT framework and proves maximal Bell correlations for free Bose and Fermi fields, whereas the present approach is less general but more explicit: the Bell problem is reduced to concrete quadratic forms for the Carleman operator in the massless case and for a Hankel operator with Bessel kernel in the massive case.

A first natural question is whether this operator-theoretic reduction can be related more directly to the wedge-modular framework underlying \cite{SummersI1987,SummersII1987,Summers1987}. In the present setting, the size of the Bell violation is controlled by the spectral edge of the relevant integral operator, and explicit test functions can be constructed accordingly.

A second question is whether one can obtain an equally explicit construction for free bosonic fields. The existence of maximal Bell correlations is already known from \cite{SummersI1987,SummersII1987,Summers1987}, but in the bosonic case an explicit and comparably canonical realization in terms of Bell operators seems less clear. It would be interesting to know whether the problem can again be reduced to a tractable integral-operator formulation.

Finally, one may ask whether some version of the present method survives beyond the free case. At present this is speculative. A natural possibility is that the relevant two-point kernel should be expressed in terms of the positive K\"all\'en--Lehmann spectral density associated with the interacting two-point function \cite{Glimm:1987ng}. Whether this again leads to a tractable integral-operator problem, and whether such an operator retains enough structure for a useful spectral analysis, remains open.

\section*{Acknowledgments}
 We are grateful to S.P.~Sorella for some useful correspondence.

\appendix

\section{From the free-field inner product to the spatial kernels}\label{sec:app}

In this Appendix we derive the spatial inner-product formulas used in Sections~\ref{sec:free} and~\ref{sec:massive} from the free-field Bell setup.

\subsection{The momentum-space inner product}

We work with free two-component spinors in \((1+1)\)-dimensional Minkowski spacetime, with metric \(g=\mathrm{diag}(1,-1)\), and with Dirac matrices
\[
\gamma^0=\sigma_x = \begin{bmatrix}
    0 & 1 \\ 1 & 0
\end{bmatrix},
\qquad
\gamma^1=i\sigma_y = \begin{bmatrix}
    0 & 1 \\ -1 & 0
\end{bmatrix}.
\]
We denote
\(
\omega_k:=\sqrt{k^2+m^2}\) for \(m\ge 0\).

Let
\(
h(t,x)=\bigl(h_1(t,x),h_2(t,x)\bigr)^{\mathsf T}\) and
\(h'(t,x)=\bigl(h'_1(t,x),h'_2(t,x)\bigr)^{\mathsf T}
\)
be two real-valued test functions in \(C_c^\infty(\mathbb R^{1+1};\mathbb R^2)\). Define their on-shell Fourier transforms by
\[
\widehat{h_j}(k):=\int_{\mathbb R^2} h_j(t,x)e^{-i\omega_k t}e^{ikx}\,\mathrm dt\,\mathrm dx, \qquad \widehat{h'_j}(k):=\int_{\mathbb R^2} h'_j(t,x)e^{-i\omega_k t}e^{ikx}\,\mathrm dt\,\mathrm dx.
\]

With the above definition of Fourier transform, and under the assumptions \(\overline{\widehat{h_j}(k)}=\widehat{h_j}(-k)\) and \(\overline{\widehat{h'_j}(k)}=\widehat{h'_j}(-k)\), the article \cite{Dudal2023} rewrites the free-field Bell inner product~\eqref{eq:intro-innerprod} as a sum of a local term and a nonlocal term
\begin{align}\label{eq:app-momentum-pairing}
\Braket{ h\mid h'}
= I_1(h,h')+I_2(h,h'), 
\end{align}
where
\begin{align*}
I_1(h,h')
&:=
\int_{\mathbb R}\frac{\mathrm dk}{2\pi}
\Big(
\overline{\widehat{h_1}(k)}\,\widehat{h'_1}(k)
+
\overline{\widehat{h_2}(k)}\,\widehat{h'_2}(k)
\Big),
\\
I_2(h,h')
&:=
\int_{\mathbb R}\frac{\mathrm dk}{2\pi}\,
\frac{k}{\omega_k}
\Big(
\overline{\widehat{h_1}(k)}\,\widehat{h'_1}(k)
-
\overline{\widehat{h_2}(k)}\,\widehat{h'_2}(k)
\Big).
\end{align*}

\subsection{Temporal mollifiers and the zero-time limit}

Fix a nonnegative, real-valued, even mollifier
\[
\beta\in C_c^\infty((-1,1)),
\qquad
\beta(t)=\beta(-t),
\qquad
\int_{\mathbb R}\beta(t)\,\mathrm dt=1,
\]
and define \(\beta_\eta(t):=\eta^{-1}\beta(t/\eta)\) for \(\eta>0\).

If \(u=(u_1,u_2)^{\mathsf T}\in C_c^\infty(\mathbb R;\mathbb R^2)\) is a purely spatial test function, define its \emph{temporal mollification} by
\[
h_\eta[u](t,x):=\beta_\eta(t)\,u(x).
\]
Its on-shell Fourier transform factorizes as
\(
\widehat{(h_\eta[u])_j}(k)
=
\widehat{\beta_\eta}(\omega_k)\,\widehat{u_j}(k),
\)
where
\[
\widehat{\beta_\eta}(\omega):=\int_{\mathbb R}\beta_\eta(t)e^{-i\omega t}\,\mathrm dt,
\qquad
\widehat{u_j}(k):=\int_{\mathbb R}u_j(x)e^{ikx}\,\mathrm dx.
\]

The next lemma identifies the \(\eta\to0^+\) limit of the local term \(I_1\) and reduces the nonlocal term \(I_2\) in terms of a spatial multiplier \(k/\omega_k\).

\begin{lemma}\label{lemma:app-zero-time-limit}
Let \(u,v\in C_c^\infty(\mathbb R;\mathbb R^2)\) be purely spatial test functions. Then
\begin{align}
I_1\!\bigl(h_\eta[u],h_\eta[v]\bigr)
&\longrightarrow
\int_{\mathbb R}\bigl(u_1(x)v_1(x)+u_2(x)v_2(x)\bigr)\,\mathrm dx,
\label{eq:app-I1-limit}
\\
I_2\!\bigl(h_\eta[u],h_\eta[v]\bigr)
&\longrightarrow
\int_{\mathbb R}\frac{\mathrm dk}{2\pi}\,
\frac{k}{\omega_k}
\Big(
\widehat{u_1}(-k)\widehat{v_1}(k)
-
\widehat{u_2}(-k)\widehat{v_2}(k)
\Big),
\label{eq:app-I2-limit}
\end{align}
 as \(\eta \to 0^+\).
\end{lemma}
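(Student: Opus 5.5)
Substituting the factorization $\widehat{(h_\eta[u])_j}(k)=\widehat{\beta_\eta}(\omega_k)\widehat{u_j}(k)$ into the definitions of $I_1$ and $I_2$ turns both integrands into a common factor $|\widehat{\beta_\eta}(\omega_k)|^2$ times an $\eta$-independent integrand; the rest of the proof is a dominated convergence argument.

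First I will record the properties of $\widehat{\beta_\eta}$ that are needed. We have $\widehat{\beta_\eta}(\omega)=\widehat\beta(\eta\omega)$. Since $\beta\ge 0$ and $\int\beta=1$, this gives $|\widehat{\beta_\eta}(\omega)|\le 1$ for all $\omega$, and by continuity of $\widehat\beta$ at $0$ with $\widehat\beta(0)=1$, also $\widehat{\beta_\eta}(\omega_k)\to 1$ pointwise in $k$ as $\eta\to0^+$. Because $\beta$ is real, $\overline{\widehat{\beta_\eta}(\omega)}\,\widehat{\beta_\eta}(\omega)=|\widehat{\beta_\eta}(\omega)|^2$.

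Next, because $u_j$ is real, $\overline{\widehat{u_j}(k)}=\widehat{u_j}(-k)$. Hence
\[
I_1\bigl(h_\eta[u],h_\eta[v]\bigr)=\int_{\mathbb R}\frac{\mathrm dk}{2\pi}\,|\widehat\beta(\eta\omega_k)|^2\bigl(\widehat{u_1}(-k)\widehat{v_1}(k)+\widehat{u_2}(-k)\widehat{v_2}(k)\bigr),
\]
and $I_2$ has the same form with the extra factor $k/\omega_k$ and a minus sign between the two terms. Since $u,v\in C_c^\infty$, the transforms $\widehat{u_j},\widehat{v_j}$ are Schwartz functions, so the $\eta$-independent integrands lie in $L^1(\mathbb R)$. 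The factor $|k/\omega_k|\le 1$ holds for every $m\ge0$; in the massless case we set $k/\omega_k=\operatorname{sgn}k$, and the point $k=0$ has measure zero. Dominated convergence, with dominating function $|\widehat{u_j}(-k)\widehat{v_j}(k)|$, therefore lets us drop the factor $|\widehat\beta(\eta\omega_k)|^2$ in the limit. This already proves \eqref{eq:app-I2-limit}.

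For \eqref{eq:app-I1-limit} I will identify the limit using Parseval's identity with the convention $\widehat{u}(k)=\int u(x)e^{ikx}\,\mathrm dx$:
\[
\int_{\mathbb R}\frac{\mathrm dk}{2\pi}\,\overline{\widehat{u_j}(k)}\,\widehat{v_j}(k)=\int_{\mathbb R} u_j(x)v_j(x)\,\mathrm dx.
\]

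The main point of care is the hypothesis $\overline{\widehat{h_j}(k)}=\widehat{h_j}(-k)$ under which \eqref{eq:app-momentum-pairing} is stated. This hypothesis fails for the on-shell transform, because $\omega_{-k}=\omega_k$ makes $\overline{\widehat{h_j}(k)}$ carry $\overline{\widehat\beta(\eta\omega_k)}$, which is not the same as $\widehat\beta(\eta\omega_k)$. Since $\beta$ is real and even, $\widehat\beta$ is real, so $\overline{\widehat\beta(\eta\omega_k)}=\widehat\beta(\eta\omega_k)$. The hypothesis then reduces to the reality of $u_j$, and it holds. Evenness of $\beta$ is used exactly here.
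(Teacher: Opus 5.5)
Your proposal is correct and follows essentially the same route as the paper: factor out $\widehat{\beta_\eta}(\omega_k)^2$, use $|\widehat\beta|\le 1$ and $\widehat\beta(\eta\omega_k)\to 1$ to apply dominated convergence, identify the $I_1$ limit by Plancherel, and check the symmetry hypothesis using that $\beta$ is real and even. One small point: $\overline{\widehat{\beta_\eta}}\,\widehat{\beta_\eta}=|\widehat{\beta_\eta}|^2$ holds for any $\beta$, so reality of $\beta$ is not needed there; it matters only where you use it, in verifying the hypothesis.
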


\begin{proof}
Since \(u_j\) is real-valued, \(\overline{\widehat{u_j}(k)}=\widehat{u_j}(-k)\). Since \(\beta\) is real-valued and even, \(\widehat{\beta_\eta}\) is real-valued and even, so
\[
\overline{\widehat{(h_\eta[u])_j}(k)}
=
\widehat{\beta_\eta}(\omega_k)\,\widehat{u_j}(-k)
=
\widehat{(h_\eta[u])_j}(-k).
\]
Thus the symmetry assumption used in \eqref{eq:app-momentum-pairing} holds for \(h_\eta[u]\). Moreover,
\(
\widehat{\beta_\eta}(\omega)=\widehat{\beta}(\eta\omega),
\) 
hence
\[
0\le \widehat{\beta_\eta}(\omega_k)^2\le \|\beta\|_{L^1}^2=1,
\qquad
\widehat{\beta_\eta}(\omega_k)^2\longrightarrow 1
\qquad
\text{as }\eta\to0^+,
\]
for every fixed \(k\in\mathbb R\).

Using the formulas for \(I_1\) and \(I_2\), together with the factorization above, we obtain
\begin{align*}
I_1\!\bigl(h_\eta[u],h_\eta[v]\bigr)
&=
\int_{\mathbb R}\frac{\mathrm dk}{2\pi}\,
\widehat{\beta_\eta}(\omega_k)^2
\Big(
\widehat{u_1}(-k)\widehat{v_1}(k)
+
\widehat{u_2}(-k)\widehat{v_2}(k)
\Big),
\\
I_2\!\bigl(h_\eta[u],h_\eta[v]\bigr)
&=
\int_{\mathbb R}\frac{\mathrm dk}{2\pi}\,
\widehat{\beta_\eta}(\omega_k)^2\,\frac{k}{\omega_k}
\Big(
\widehat{u_1}(-k)\widehat{v_1}(k)
-
\widehat{u_2}(-k)\widehat{v_2}(k)
\Big).
\end{align*}
Since \(u_j\) and \(v_j\) are smooth and compactly supported, their Fourier transforms are Schwartz functions. Hence both integrands are absolutely integrable and uniformly dominated in \(\eta\). Dominated convergence gives \eqref{eq:app-I2-limit}, while for \(I_1\) it yields
\[
I_1\!\bigl(h_\eta[u],h_\eta[v]\bigr)
\to
\int_{\mathbb R}\frac{\mathrm dk}{2\pi}
\Big(
\widehat{u_1}(-k)\widehat{v_1}(k)
+
\widehat{u_2}(-k)\widehat{v_2}(k)
\Big).
\]
The latter is exactly \eqref{eq:app-I1-limit} by Fourier inversion.
\end{proof}

\begin{corollary}\label{cor:app-norm}
Let \(u\in C_c^\infty(\mathbb R;\mathbb R^2)\) be real-valued. Then
\[
\Braket{ h_\eta[u]\mid h_\eta[u]}
\to
\int_{\mathbb R}\bigl(u_1(x)^2+u_2(x)^2\bigr)\,\mathrm dx
\qquad
\text{as }\eta\to0^+.
\]
\end{corollary}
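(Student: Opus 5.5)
The corollary should follow by specializing Lemma~\ref{lemma:app-zero-time-limit} to \(v=u\) and showing that the nonlocal contribution vanishes. By \eqref{eq:app-momentum-pairing}, applied to \(h=h'=h_\eta[u]\) (admissible by the symmetry check in the proof of Lemma~\ref{lemma:app-zero-time-limit}), we have
\[
\Braket{ h_\eta[u]\mid h_\eta[u]}=I_1\bigl(h_\eta[u],h_\eta[u]\bigr)+I_2\bigl(h_\eta[u],h_\eta[u]\bigr).
\]
By \eqref{eq:app-I1-limit}, the first term converges to \(\int_{\mathbb R}(u_1^2+u_2^2)\,\mathrm dx\). It therefore suffices to show that \(I_2\bigl(h_\eta[u],h_\eta[u]\bigr)\to 0\).

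For this term, rather than passing to the limit first, I will show it vanishes identically for every \(\eta>0\). Its integrand is
\[
\widehat{\beta_\eta}(\omega_k)^2\,\frac{k}{\omega_k}\Big(\widehat{u_1}(-k)\widehat{u_1}(k)-\widehat{u_2}(-k)\widehat{u_2}(k)\Big).
\]
Since \(u_j\) is real, \(\widehat{u_j}(-k)\widehat{u_j}(k)=|\widehat{u_j}(k)|^2\), which is even in \(k\). The factor \(\widehat{\beta_\eta}(\omega_k)^2\) is even in \(k\) because \(\omega_k\) is. The factor \(k/\omega_k\) is odd and bounded by \(1\). So the integrand is an odd function of \(k\). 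It is absolutely integrable because \(\widehat{u_j}\) is Schwartz, and hence its integral over \(\mathbb R\) is zero. The same parity argument applied to the limit in \eqref{eq:app-I2-limit} also gives zero, which is consistent.

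Adding the two terms gives the claim. There is no real obstacle here. The only point to check carefully is the parity of each factor, together with integrability, so that the integral of an odd function may legitimately be set to zero. For \(m=0\) the factor \(k/\omega_k=\operatorname{sgn}k\) is still odd and bounded, so the argument goes through unchanged in the massless case.
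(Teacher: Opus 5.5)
Your proposal is correct and is essentially the paper's argument: you specialize Lemma~\ref{lemma:app-zero-time-limit} to \(v=u\) and kill the nonlocal term by parity, since \(|\widehat{u_j}(k)|^2\) is even while \(k/\omega_k\) is odd. The only difference is cosmetic: you show that \(I_2\bigl(h_\eta[u],h_\eta[u]\bigr)=0\) already for every \(\eta>0\), because the extra factor \(\widehat{\beta_\eta}(\omega_k)^2\) is even, whereas the paper applies the parity argument directly to the limit in \eqref{eq:app-I2-limit}.
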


\begin{proof}
If \(v=u\), the limit \eqref{eq:app-I2-limit} vanishes, because
\(
\widehat{u_j}(-k)\widehat{u_j}(k)=|\widehat{u_j}(k)|^2
\)
is even in \(k\), whereas \(k/\omega_k\) is odd.
\end{proof}

In the case of spatially separated supports, the limiting nonlocal term  \eqref{eq:app-I2-limit} can be rewritten in configuration space by means of a Bessel kernel. Below, \(K_\nu\) is the modified Bessel function of the second kind of order \(\nu\), see \cite{Wat95}. 

\begin{proposition}\label{prop:app-bessel-pairing}
Assume \(m>0\). Let \(u,v\in C_c^\infty(\mathbb R;\mathbb R^2)\) be purely spatial test functions and assume \[
\operatorname{supp}(u) \subset (-\infty, -\varepsilon/2), \qquad \operatorname{supp}(v) \subset (\varepsilon/2, \infty)
\] 
for some fixed \(\varepsilon > 0\). 
Then
\begin{align*} 
I_2\!\bigl(h_\eta[u],h_\eta[v]\bigr)
\longrightarrow
\frac{i}{\pi}
\iint_{\mathbb R^2} \!
mK_1\big(m(y-x)\big) \bigl(u_1(x)v_1(y)-u_2(x)v_2(y)\bigr)\,\mathrm dx\,\mathrm dy  
\end{align*}
as \(\eta\to0^+\).
\end{proposition}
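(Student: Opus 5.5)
The plan is to start from the limit already identified in Lemma~\ref{lemma:app-zero-time-limit} and rewrite it in configuration space. By \eqref{eq:app-I2-limit},
\[
\lim_{\eta\to0^+} I_2\bigl(h_\eta[u],h_\eta[v]\bigr)=\int_{\mathbb R}\frac{\mathrm dk}{2\pi}\,\frac{k}{\omega_k}\bigl(\widehat{u_1}(-k)\widehat{v_1}(k)-\widehat{u_2}(-k)\widehat{v_2}(k)\bigr).
\]
For each component $j$ I will introduce the cross-correlation
\[
w_j(r):=\int_{\mathbb R}u_j(x)\,v_j(x+r)\,\mathrm dx .
\]
Then $\widehat{u_j}(-k)\widehat{v_j}(k)=\widehat{w_j}(k)=\int w_j(r)e^{ikr}\,\mathrm dr$. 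Because of the support assumptions, $w_j\in C_c^\infty(\mathbb R)$ and $\operatorname{supp}w_j\subset(\varepsilon,\infty)$. So the right-hand side equals $\langle T, w_1-w_2\rangle$, where $T$ is the tempered distribution with $\widehat T = k/\omega_k$, that is, $T=\mathcal F^{-1}(k/\omega_k)$ with the normalisation $\int\frac{\mathrm dk}{2\pi}e^{-ikr}$ matching the conventions above. The pairing is justified because $k/\omega_k$ is bounded and $\widehat{w_j}$ is Schwartz. It therefore suffices to identify $T$ as a smooth function on $(0,\infty)$, namely $T(r)=\frac{i}{\pi}\,mK_1(mr)$ for $r>0$.

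To compute $T$, I will write $k/\omega_k=k\cdot(1/\omega_k)$. Multiplication by $k$ on the Fourier side becomes $\pm i\,\partial_r$ in position space, with the sign fixed by the convention. Next I will use the classical formula
\[
\int_{\mathbb R}\frac{\cos(kr)}{\sqrt{k^2+m^2}}\,\mathrm dk=2K_0(m|r|),\qquad r\neq0,
\]
from \cite{Wat95}. This shows that $\mathcal F^{-1}(1/\omega_k)=\frac1\pi K_0(m|r|)$ as a locally integrable function, and hence as a tempered distribution. The identification as distributions needs a short justification, since $1/\omega_k\notin L^1$. I would first check it against $e^{-\delta|k|}/\omega_k$, which is integrable, and then let $\delta\to0^+$ by dominated convergence on the test-function side.

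The distributional derivative of $\frac1\pi K_0(m|r|)$ agrees on $(0,\infty)$ with the classical derivative. Using $K_0'=-K_1$, this gives $T(r)=\frac{i}{\pi}mK_1(mr)$ for $r>0$ after tracking the sign of $\pm i\,\partial_r$; I will fix that sign by a careful computation with the stated Fourier conventions. The sign should come out as claimed, since the massless limit $mK_1(mr)\to1/r$ reproduces \eqref{eq:massless-spatial-pairing}. Because $w_1-w_2$ is supported in $(\varepsilon,\infty)$, where $T$ is smooth, the pairing is the ordinary integral $\int_\varepsilon^\infty T(r)\bigl(w_1(r)-w_2(r)\bigr)\,\mathrm dr$.

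Finally I will unfold the correlations with Fubini and the substitution $y=x+r$. This yields
\[
\frac{i}{\pi}\iint mK_1\bigl(m(y-x)\bigr)\bigl(u_1(x)v_1(y)-u_2(x)v_2(y)\bigr)\,\mathrm dx\,\mathrm dy,
\]
which is the claim. The main obstacle is the rigorous identification of $\mathcal F^{-1}(k/\omega_k)$ away from the origin, since $k/\omega_k$ is not integrable. The separation of supports, giving $\operatorname{supp}w_j\subset(\varepsilon,\infty)$, is exactly what lets us ignore the singular behaviour of $T$ at $r=0$; this is where the $\delta$-regularisation argument has to be done carefully.
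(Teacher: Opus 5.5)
Your route is essentially the paper's. Both rest on three facts: $1/\omega_k$ is the Fourier transform of $G(r)=\frac1\pi K_0(m|r|)$, the factor $k$ becomes a derivative of $G$, and $K_0'=-K_1$. In both, the support separation keeps you away from the singularity at $r=0$. The differences are in execution:
\begin{itemize}
\item The paper avoids distributions. It computes $\widehat G(k)=1/\omega_k$ from the absolutely convergent cosine transform of $K_0\in L^1$, writes $\int\frac{\mathrm dk}{2\pi}\frac{k}{\omega_k}\widehat b(k)e^{-ikx}=i\frac{\mathrm d}{\mathrm dx}(G*b)(x)$ pointwise for $x\in\operatorname{supp}a$, and then integrates against $a$.
\item Working in that direction also makes your $e^{-\delta|k|}$ regularisation of the conditionally convergent Basset integral unnecessary. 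An $L^1$ function whose classical Fourier transform is $1/\omega_k$ is automatically the tempered inverse transform.
\end{itemize}

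What does not hold up as written is the sign, and the sign is part of the statement. With your normalisation $T(r)=\int\frac{\mathrm dk}{2\pi}\frac{k}{\omega_k}e^{-ikr}$ and $w=w_1-w_2$, two of your steps are each off by a sign:
\begin{itemize}
\item Parseval gives $\langle T,w\rangle=\int\frac{\mathrm dk}{2\pi}\frac{k}{\omega_k}\widehat w(-k)$. Because $k/\omega_k$ is odd, this is \emph{minus} the quantity you need.
\item Since $ke^{-ikr}=i\partial_r e^{-ikr}$, this $T$ equals $iG'$. Hence $T(r)=-\frac{i}{\pi}mK_1(mr)$ for $r>0$, not $+\frac{i}{\pi}mK_1(mr)$.
\end{itemize}
The two slips cancel, so your final formula is right. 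But if you did the promised ``careful computation'' of $T$ while keeping the pairing identity as stated, you would get the wrong overall sign. Your fallback check does not settle it either: in this paper the massless kernel is derived from this very proposition (Corollary~\ref{cor:app-main-massless}), so it is not an independent check.

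The clean fix is to pair $w$ with $S(r):=\int\frac{\mathrm dk}{2\pi}\frac{k}{\omega_k}e^{ikr}$. Then $\int\frac{\mathrm dk}{2\pi}\frac{k}{\omega_k}\widehat w(k)=\langle S,w\rangle$ holds directly. The identity $ke^{ikr}=-i\partial_r e^{ikr}$ gives $S=-iG'$. Since $G'(r)=-\frac1\pi mK_1(mr)$ for $r>0$, you get $S(r)=\frac{i}{\pi}mK_1(mr)$ on $\operatorname{supp}w\subset(\varepsilon,\infty)$, as required.
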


\begin{proof} 
By Lemma~\ref{lemma:app-zero-time-limit}, it suffices to treat one pair \(a,b\in C_c^\infty(\mathbb R;\mathbb R)\), with
\(\operatorname{supp}(a)\subset(-\infty,-\varepsilon/2),\) \(\operatorname{supp}(b)\subset(\varepsilon/2,\infty),
\) 
and to show that
\begin{equation*}
\int_{\mathbb R}\frac{\mathrm dk}{2\pi}\,\frac{k}{\omega_k}\,\widehat a(-k)\widehat b(k)
=
\frac{i}{\pi}
\iint_{\mathbb R^2}
mK_1(m|x-y|)\,a(x)b(y)\,\mathrm dx\,\mathrm dy.
\end{equation*}

Set
\(
G(x):=\frac{1}{\pi}K_0(m|x|).
\) 
By \cite[\S13.21(8)]{Wat95}, one has \(\int_0^\infty K_0(t)\,\mathrm dt=\pi/2\), hence \(G\in L^1(\mathbb R)\). Moreover, using the cosine-transform identity \cite[\S6.22(14)]{Wat95} for \(K_0\), we obtain
\[
\widehat G(k)
=
\int_{\mathbb R}G(x)e^{ikx}\,\mathrm dx
=
\frac{2}{\pi}\int_0^\infty K_0(mx)\cos(kx)\,\mathrm dx
=
\frac{1}{\sqrt{k^2+m^2}}
=
\frac{1}{\omega_k}.
\]
Therefore
\[
(G*b)(x)
=
\int_{\mathbb R}\frac{\mathrm dk}{2\pi}\,\frac{1}{\omega_k}\,\widehat b(k)e^{-ikx}.
\]
Since \(k e^{-ikx}= i\,\frac{\mathrm d}{\mathrm dx}e^{-ikx}\), we obtain for \(x\in \operatorname{supp}(a)\),
\[
\int_{\mathbb R}\frac{\mathrm dk}{2\pi}\,\frac{k}{\omega_k}\,\widehat b(k)e^{-ikx}
=
i\,\frac{\mathrm d}{\mathrm dx}(G*b)(x)
=
i\int_{\mathbb R}G'(x-y)b(y)\,\mathrm dy,
\]
where the last identity is justified by the fact that \(|x-y|\ge \varepsilon\) for \(x\in \operatorname{supp}(a)\) and \(y\in \operatorname{supp}(b)\). 
Since
\[
G'(r)
=
\frac{1}{\pi}\frac{\mathrm d}{\mathrm dr}K_0(m|r|)
=
-\frac{1}{\pi}\operatorname{sgn}(r)\,mK_1(m|r|)
=
\frac{1}{\pi}\operatorname{sgn}(-r)\,mK_1(m|r|),
\]
and \(\operatorname{sgn}(y-x) =1\) for \(x \in \operatorname{supp}(a), y \in \operatorname{supp}(b)\), we obtain
\[
\int_{\mathbb R}\frac{\mathrm dk}{2\pi}\,\frac{k}{\omega_k}\,\widehat b(k)e^{-ikx}
=
\frac{i}{\pi}\int_{\mathbb R}mK_1\big(m(y-x)\big)\,b(y)\,\mathrm dy, \qquad x\in \operatorname{supp}(a).
\]
Multiplying by \(a(x)\) and integrating over \(x\) gives
\begin{align*}
\int_{\mathbb R}\frac{\mathrm dk}{2\pi}\,\frac{k}{\omega_k}\,\widehat a(-k)\widehat b(k)
&=
\int_{\mathbb R} a(x)
\left(
\int_{\mathbb R}\frac{\mathrm dk}{2\pi}\,\frac{k}{\omega_k}\,\widehat b(k)e^{-ikx}
\right)\mathrm dx
\\
&=
\frac{i}{\pi}
\iint_{\mathbb R^2}
mK_1\big(m(y-x)\big)\,a(x)b(y)\,\mathrm dx\,\mathrm dy.
\end{align*}
This proves the proposition. 
\end{proof}

\subsection{The inner products used in the main text}\label{subsec:innerprod}

Let
\(
f,f',g,g'\in C_c^\infty(\mathbb R;\mathbb R^2)
\)
be spatial test functions such that for some \(\varepsilon>0\),
\[
\operatorname{supp}(f),\operatorname{supp}(f')\subset(-\infty,-\varepsilon/2),
\qquad
\operatorname{supp}(g),\operatorname{supp}(g')\subset(\varepsilon /2,\infty).
\]
For \(0<\eta<\varepsilon/2\), define the corresponding spacetime test functions by temporal mollification:
\[
h_\eta[f](t,x):=\beta_\eta(t)f(x),
\qquad
h_\eta[g](t,x):=\beta_\eta(t)g(x),
\]
and similarly for \(f'\) and \(g'\). Then every Alice test function \(h_\eta[a]\), with \(a\in\{f,f'\}\), is strictly spacelike separated from every Bob test function \(h_\eta[b]\), with \(b\in\{g,g'\}\), since
\(
|x-y|\ge \varepsilon > 2\eta \ge |t-s|
\) 
on the corresponding supports. Accordingly, define
\[
\ip{a}{b}:=\lim_{\eta\to0^+}\Braket{h_\eta[a]\mid h_\eta[b]},
\qquad
a,b\in\{f,f',g,g'\}.
\]
\begin{theorem}\label{thm:app-main-massive}
Let \(a \in \{f,f'\}\) and \(b \in \{g,g'\}\) be as above.  Then: 
\begin{align}
\ip{a}{a}
&=
\int_{-\infty}^0 \bigl(a_1(x)^2+a_2(x)^2\bigr)\,\mathrm dx,
\qquad
\ip{b}{b}
=
\int_0^\infty \bigl(b_1(y)^2+b_2(y)^2\bigr)\,\mathrm dy,
\label{eq:app-main-norm}
\\
\ip{a}{b}
&=
\frac{i}{\pi}
\iint_{(-\infty,0]\times[0,\infty)}
mK_1\bigl(m(y-x)\bigr)
\bigl(a_1(x)b_1(y)-a_2(x)b_2(y)\bigr)\,\mathrm dx\,\mathrm dy.
\label{eq:app-main-massive-pairing}
\end{align}
\end{theorem}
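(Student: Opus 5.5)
The plan is to assemble the theorem directly from the momentum-space decomposition \eqref{eq:app-momentum-pairing} and the limits already established in Lemma~\ref{lemma:app-zero-time-limit}, Corollary~\ref{cor:app-norm} and Proposition~\ref{prop:app-bessel-pairing}. Since $\ip{a}{b}$ is defined as $\lim_{\eta\to0^+}\Braket{h_\eta[a]\mid h_\eta[b]}$, we first check that the decomposition applies to the mollified test functions. By the first part of the proof of Lemma~\ref{lemma:app-zero-time-limit}, $h_\eta[u]$ satisfies the reality condition $\overline{\widehat{h_j}(k)}=\widehat{h_j}(-k)$ for every real spatial $u$. Hence
\[
\Braket{h_\eta[a]\mid h_\eta[b]}=I_1\bigl(h_\eta[a],h_\eta[b]\bigr)+I_2\bigl(h_\eta[a],h_\eta[b]\bigr),
\]
and it suffices to compute the limit of each term separately.

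For the norms \eqref{eq:app-main-norm}, apply Corollary~\ref{cor:app-norm} with $u=a$ and with $u=b$. This gives $\int_{\mathbb R}(a_1^2+a_2^2)\,\mathrm dx$ and $\int_{\mathbb R}(b_1^2+b_2^2)\,\mathrm dy$. Because $\operatorname{supp}(a)\subset(-\infty,-\varepsilon/2)$ and $\operatorname{supp}(b)\subset(\varepsilon/2,\infty)$, these integrals reduce to integrals over $(-\infty,0]$ and $[0,\infty)$ respectively.

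For the mixed pairing \eqref{eq:app-main-massive-pairing}, split into the local and nonlocal parts. For the local part, \eqref{eq:app-I1-limit} gives $I_1\to\int_{\mathbb R}(a_1b_1+a_2b_2)\,\mathrm dx$, which vanishes because $a$ and $b$ have disjoint supports. For the nonlocal part with $m>0$, Proposition~\ref{prop:app-bessel-pairing}, applied with $u=a$ and $v=b$, gives exactly the Bessel-kernel double integral over $\mathbb R^2$. The supports then restrict the integration region to $(-\infty,0]\times[0,\infty)$, on which $y-x\ge\varepsilon>0$, so $K_1(m(y-x))$ is evaluated at positive arguments and the integrand is smooth and compactly supported. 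Adding the two limits gives \eqref{eq:app-main-massive-pairing}.

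There is no real obstacle, since everything reduces to previously established limits. The one point that needs care is consistency of the real and complex structure: the left argument of the pairing enters through $\overline{\widehat{a_j}(k)}=\widehat{a_j}(-k)$, which matches the form of \eqref{eq:app-I2-limit} used in Proposition~\ref{prop:app-bessel-pairing}, so no extra complex conjugation appears. If the massless statement is also wanted, it follows by the same argument with $\frac{1}{\pi}\frac{1}{y-x}$ in place of the Bessel kernel. This is justified either by taking $m\to0^+$ using $mK_1(mr)\to 1/r$ uniformly for $r\ge\varepsilon$ with the exponential bound \eqref{eq:boundsK}, or directly from the kernel $\tfrac{i}{\pi}\frac{1}{y-x}$ of the multiplier $\operatorname{sgn}(k)$ away from the diagonal.
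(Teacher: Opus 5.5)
Your proposal is correct and follows the paper's proof, which simply combines Corollary~\ref{cor:app-norm} for the norms with Proposition~\ref{prop:app-bessel-pairing} for the nonlocal term. You also make explicit a step the paper leaves implicit: by \eqref{eq:app-I1-limit}, the local term $I_1$ tends to $\int_{\mathbb R}(a_1b_1+a_2b_2)\,\mathrm dx=0$ for the disjointly supported mixed pairs, so only the Bessel-kernel contribution survives.
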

\begin{proof}
    This follows from Corollary~\ref{cor:app-norm} and Proposition~\ref{prop:app-bessel-pairing}. 
\end{proof}

\begin{corollary}\label{cor:app-main-massless}
In the massless limit \(m \to 0^+\), the norm formula \eqref{eq:app-main-norm} is unchanged, and
\[
\ip{a}{b}
=
\frac{i}{\pi}
\iint_{(-\infty,0]\times[0,\infty)}
\frac{a_1(x)b_1(y)-a_2(x)b_2(y)}{y-x}\,\mathrm dx\,\mathrm dy.
\]
\end{corollary}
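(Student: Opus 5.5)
We will derive the corollary from Theorem~\ref{thm:app-main-massive} by passing to the limit \(m\to0^+\) at fixed spatial test functions \(a\in\{f,f'\}\), \(b\in\{g,g'\}\). The norm formula \eqref{eq:app-main-norm} does not depend on \(m\): Corollary~\ref{cor:app-norm} holds for all \(m\ge0\), because \(k/\omega_k\) stays odd in \(k\) even when \(m=0\) (where it equals \(\operatorname{sgn}k\)). So only the pairing needs attention.

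For the pairing, the kernel \(mK_1(m(y-x))\) converges pointwise to \(1/(y-x)\) as \(m\to0^+\). This follows from the small-argument behaviour \(uK_1(u)\to1\) recorded in the proof of Proposition~\ref{prop:massive-spectrum}, with \(u=m(y-x)\). On the supports we have \(y-x\ge\varepsilon\). The bound \(0<K_1(u)\le1/u\) from \eqref{eq:boundsK} gives
\[
0<mK_1\bigl(m(y-x)\bigr)\le \frac{1}{y-x}\le\frac1\varepsilon .
\]
The functions \(a_j(x)b_j(y)\) are compactly supported and bounded, so dominated convergence applies. The right-hand side of \eqref{eq:app-main-massive-pairing} therefore converges to the stated massless integral.

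There is one point of interpretation. The corollary should describe \(\ip ab\) for the massless field itself, that is, \(\lim_{\eta\to0^+}\) of the \(m=0\) pairing, not merely the \(m\to0\) limit of the massive formula. I would handle this directly. For \(m=0\) one has \(k/\omega_k=\operatorname{sgn}k\). Lemma~\ref{lemma:app-zero-time-limit} still applies, since its proof only used boundedness of \(k/\omega_k\), and it gives
\[
\int\frac{\mathrm dk}{2\pi}\,\operatorname{sgn}(k)\,\widehat a(-k)\widehat b(k).
\]
This is a Hilbert-transform pairing. The distributional Fourier transform of \(\operatorname{sgn}\) is a principal value \(\mathrm{p.v.}\,1/x\) up to a factor of \(\pm 2i\). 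Because the supports are separated by \(\varepsilon\), the principal value is an ordinary integral, and the result is \(\frac{i}{\pi}\iint \frac{a(x)b(y)}{y-x}\,\mathrm dx\,\mathrm dy\).

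The main obstacle is getting the sign and constant right. I would fix them by consistency with the massive case, using dominated convergence in \(k\). The factors \(k/\sqrt{k^2+m^2}\to\operatorname{sgn}k\) are bounded by \(1\), and \(\widehat a\,\widehat b\) is integrable. Hence the massive momentum integral of Proposition~\ref{prop:app-bessel-pairing} converges to the massless one as \(m\to0^+\). This shows the two limits commute and yields exactly the claimed formula.
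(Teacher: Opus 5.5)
Your argument is correct, and its first part is the paper's proof. The paper likewise notes that the norm formula does not depend on $m$. It then passes to the limit $m\to0^+$ in \eqref{eq:app-main-massive-pairing} by dominated convergence, using $y-x\ge\varepsilon$ on the supports, the bound $0<mK_1(mr)\le 1/r$, and the pointwise limit $mK_1(mr)\to 1/r$.

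Your second part is an addition the paper does not make. The paper's proof only identifies the $m\to0^+$ limit of the massive spatial pairings, yet Section~\ref{sec:free} cites the corollary as the pairing of the genuinely massless field. You observe that Lemma~\ref{lemma:app-zero-time-limit} holds verbatim at $m=0$ with $k/\omega_k=\operatorname{sgn}k$. You combine this with dominated convergence in $k$ (using $|k/\omega_k|\le 1$ and integrability of $\widehat a(-k)\widehat b(k)$) and with the fixed-$m$ identity contained in the proof of Proposition~\ref{prop:app-bessel-pairing}. This shows that the $\eta\to0^+$ limit of the $m=0$ pairing equals the $m\to0^+$ limit of the massive one. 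That closes the interpretive gap and fixes the constant $i/\pi$ without computing the distributional Fourier transform of $\operatorname{sgn}$ by hand, so it is worth keeping.
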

\begin{proof}
On the support of the integrand one has \(y-x\ge \varepsilon\), and (by e.g.\ the proof of Proposition~\ref{prop:massive-spectrum}) for every fixed \(r>0\) we have \(
0<mK_1(mr)\le \frac1r\) and \(mK_1(mr)\longrightarrow \frac1r\) as \(m \to 0^+\).  
Hence the kernel in \eqref{eq:app-main-massive-pairing} converges pointwise to \((y-x)^{-1}\) and is dominated by the integrable kernel \((y-x)^{-1}\), so the result follows by dominated convergence.
\end{proof}

\noindent \textbf{Conclusion.}
Theorem~\ref{thm:app-main-massive} and Corollary~\ref{cor:app-main-massless} identify the spatial pairings used in the main text as limits of genuine QFT pairings between spacelike separated spacetime test functions. For small \(\varepsilon>0\), the constructions in Sections~\ref{sec:free} and~\ref{sec:massive} produce real-valued purely spatial test functions
\(
f^{(\varepsilon)},f'{}^{(\varepsilon)},g^{(\varepsilon)},g'{}^{(\varepsilon)}
\)
with
\[
\operatorname{supp}\big(f^{(\varepsilon)}\big),\,\operatorname{supp}\big(f'{}^{(\varepsilon)}\big)\subset(-\infty,-\varepsilon/2),
\qquad
\operatorname{supp}\big(g^{(\varepsilon)}\big),\operatorname{supp}\big(g'^{(\varepsilon)}\big)\subset(\varepsilon/2,\infty),
\]
such that the corresponding spatial correlator \(\bigl|\Braket{\mathcal C^{(\varepsilon)}}\bigr|\) is arbitrarily close to \(2\sqrt2\). For \(0<\eta<\varepsilon/2\), the temporally mollified spacetime test functions remain strictly spacelike separated, and the corresponding QFT pairings converge to the spatial pairings as \(\eta\to0^+\). Hence, given any \(\delta>0\), one may first choose \(\varepsilon>0\) such that
\[
2\sqrt2-\delta<\bigl|\big\langle {\mathcal C^{(\varepsilon)}} \big\rangle\bigr|\le 2\sqrt2,
\]
and then choose \(0<\eta<\varepsilon/2\) sufficiently small so that the resulting QFT Bell--CHSH correlator also lies in \((2\sqrt2-\delta,\,2\sqrt2]\).


\begin{thebibliography}{10}\setlength{\itemsep}{-0.5mm} \setlength{\parsep}{0mm} \footnotesize 

\bibitem{Bell1964}
J.~S. Bell.
\newblock {On the Einstein Podolsky Rosen paradox}.
\newblock {\em Physics Physique Fizika}, 1:195--200, 1964.

\bibitem{Caribe:2026mam}
J.~G.~A. Carib{\'e}, M.~S. Guimaraes, I.~Roditi, and S.~P. Sorella.
\newblock {Modular Theory and the Bell-CHSH inequality in relativistic scalar Quantum Field Theory}, 2026. arXiv:{\href{https://arxiv.org/abs/2603.25873}{2603.25873}}. 

\bibitem{Clauser1969}
J.~F. Clauser, M.~A. Horne, A.~Shimony, and R.~A. Holt.
\newblock {Proposed Experiment to Test Local Hidden-Variable Theories}.
\newblock {\em Phys. Rev. Lett.}, 23:880--884, 1969.

\bibitem{Dudal2023}
D.~Dudal, P.~De~Fabritiis, M.~S. Guimaraes, I.~Roditi, and S.~P. Sorella.
\newblock {Maximal violation of the Bell-Clauser-Horne-Shimony-Holt inequality
  via bumpified Haar wavelets}.
\newblock {\em Phys. Rev. D}, 108:L081701, 2023.

\bibitem{DV26}
D.~Dudal and K.~Vandermeersch.
\newblock {Further evidence for near-Tsirelson Bell--CHSH violations in quantum
  field theory via Haar wavelets}. 
\newblock {\em Eur. Phys. J. C}, 86:349, 2026.

\bibitem{Glimm:1987ng}
J.~Glimm and A.~Jaffe.
\newblock {\em Quantum Physics: A Functional Integral Point of View}.
\newblock Springer-Verlag, New York, second edition, 1987.

\bibitem{Guimaraes:2024mmp}
M.~S. Guimaraes, I.~Roditi, and S.~P. Sorella.
\newblock {Bell{\textquoteright}s inequality in relativistic Quantum Field
  Theory}.
\newblock {\em Rev. Phys.}, 13:100121, 2025.

\bibitem{howland1992spectral}
J.~S. Howland.
\newblock Spectral theory of operators of {H}ankel type. {II}.
\newblock {\em Indiana Univ. Math. J.}, 41(2):409--426, 427--434, 1992.

\bibitem{Pel03}
V.~V. Peller.
\newblock {\em Hankel operators and their applications}.
\newblock Springer Monographs in Mathematics. Springer-Verlag, New York, 2003.

\bibitem{petrov2019exact}
V.~Petrov.
\newblock {On exact solutions of Hankel equations}.
\newblock {\em {St. Petersburg Math. J.}}, 30(1):123--148, 2019.

\bibitem{SummersI1987}
S.~J. Summers and R.~Werner.
\newblock {Bell’s inequalities and quantum field theory. I. General setting}.
\newblock {\em J. Math. Phys.}, 28(10):2440--2447, 1987.

\bibitem{SummersII1987}
S.~J. Summers and R.~Werner.
\newblock {Bell’s inequalities and quantum field theory. II. Bell’s
  inequalities are maximally violated in the vacuum}.
\newblock {\em J. Math. Phys.}, 28(10):2448--2456, 1987.

\bibitem{Summers1987}
S.~J. Summers and R.~Werner.
\newblock {Maximal violation of Bell's inequalities is generic in quantum field
  theory}.
\newblock {\em Comm. Math. Phys.}, 110:247--259, 1987.

\bibitem{Cirelson80}
B.~S. Tsirelson.
\newblock {Quantum generalizations of Bell's inequality}.
\newblock {\em Lett. Math. Phys.}, 4(2):93--100, 1980.

\bibitem{Wat95}
G.~N. Watson.
\newblock {\em A {T}reatise on the {T}heory of {B}essel {F}unctions}.
\newblock Cambridge University Press, Cambridge, 1995.

\bibitem{yafaev2014spectral}
D.~R. Yafaev.
\newblock {Spectral and scattering theory for perturbations of the Carleman
  operator}.
\newblock {\em St. Petersburg Math. J.}, 25(2):339--359, 2014.

\bibitem{Yafaev2010commutator}
D.~R. Yafaev.
\newblock {A commutator method for the diagonalization of Hankel operators}.
\newblock {\em {Funct. Anal. Appl.}}, 44(4):295--306,
  2010.

\end{thebibliography}
\end{document}